\setlist[enumerate]{leftmargin=*}
\setlist[itemize]{leftmargin=*}
\newtheorem{theorem}{Theorem}
\newtheorem{remark}{Remark}
\title{Variable-moment fluid closures with Hamiltonian structure}
\author[1,*]{J. W. Burby}
\affil[1]{Los Alamos National Laboratory, Theoretical Division, Los Alamos, NM 87545, USA}
\affil[*]{jburby@lanl.gov}
\begin{abstract}
Based on ideas due to Scovel-Weinstein, I present a general framework for constructing fluid moment closures of the Vlasov-Poisson system that exactly preserve that system's Hamiltonian structure. Notably, the technique applies in any space dimension and produces closures involving arbitrarily-large finite collections of moments. After selecting a desired collection of moments, the Poisson bracket for the closure is uniquely determined. Therefore data-driven fluid closures can be constructed by adjusting the closure Hamiltonian for compatibility with kinetic simulations.
\end{abstract}
\begin{document}

\flushbottom
\maketitle
%
%
\thispagestyle{empty}


\section{Introduction}
The simplest non-dissipative kinetic model for a single-component plasma in $N$ space dimensions is the Vlasov-Poisson system,
\begin{gather*}
\partial_tf + m^{-1}p_i\,\partial_{q^i} f - e\,m^{-1}\partial_{q^i} V\partial_{p_i}f =0,\quad f:\mathbb{R}_q^N\times \mathbb{R}_p^N\rightarrow\mathbb{R} \\
-\Delta V = 4\pi e \int f\,dp - 4\pi e n_0  ,\quad V:\mathbb{R}_q^N\rightarrow \mathbb{R}.
\end{gather*}
Here $f=f(q,p)$ denotes the Vlasov distribution function and $V = V(q)$ is the electrostatic potential. Particles that comprise the sole plasma species have charge $e$, mass $m$, position $q = (q^1,\dots,q^N)$, and momentum $p=(p_1,\dots,p_N)$. A uniform neutralizing background $n_0$ ensures overall charge neutrality. The explicit time dependence of $f$ and $V$ has been suppressed. This single-component model generalizes simply to allow for multiple plasma species.

Kinetic plasma models, like the Vlasov-Poisson system, are formally equivalent to infinite hierarchies of hydrodynamic equations for the fluid moments. The degree-$k$ fluid moment $\bm{M}^k = M_{i_1\dots i_k}\,dq^{i_1}\otimes \dots \otimes dq^{i_k}$ is a symmetric tensor whose components are given by the momentum space integrals
\begin{align*}
M_{i_1\dots i_{k}}(q) = \int p_{i_1}\dots p_{i_k}\,f(q,p)\,dp,\quad i_\ell=1,\dots,N,\quad k=0,1,2,\dots.
\end{align*}
In general, the evolution equation for the degree-$k$ fluid moment couples to the degree-$(k+1)$ fluid moment, a phenomenon that epitomizes the so-called moment closure problem. For example, evolution equations for the first three fluid moments $(M^0,M^1,M^2)=(n,\bm{P},\mathbb{S})$ are
\begin{gather}
\partial_tn + m^{-1}\nabla\cdot \bm{P} = 0\label{density_eqn}\\
\partial_t\bm{P} + m^{-1}\nabla\cdot\mathbb{S} = -e\,n\,\nabla{V}\label{momentum_eqn}\\
\partial_t\mathbb{S} + m^{-1}\nabla\cdot\mathcal{T} = -e(\nabla{V}\bm{P} + \bm{P}\nabla{V}),\label{momentum_flux_eqn}
\end{gather}
where $\mathcal{T} = M^3$ denotes the degree-$3$ fluid moment.
Any prescription for expressing the degree-$(k+1)$ fluid moment as a function of lower-degree fluid moments is known as a fluid closure. Upon introducing a fluid closure, the rate of change of the first $k$ fluid moments becomes a function of only those first $k$ fluid moments. Thus, it becomes feasible to solve a finite collection of hydrodynamic equations to find time evolution of the first $k$ fluid moments. Of course, there is no guarantee that moment evolution determined by a fluid closure bears any relation to moment evolution determined by the parent kinetic model, unless the fluid closure is chosen wisely.


Traditional methods for developing fluid closures involve some combination of asymptotic expansions\cite{Gorban_2004,MacKay_2004,Gorban_2018,Burby_Klotz_2020} and intuitive guesswork. More recently, advances in machine learning\cite{Han_2019,Li_2023} have accelerated the development of data-driven fluid closures, as well as other data-driven reduced-order models\cite{Hesthaven_PIC_2023}. But fluid closures built using black-box neural networks are known to deliver poor results when employed over large simulation time intervals. There is therefore a growing effort to incorporate inductive bias into data-driven fluid closures, with the hope of improving long-term predictions. For example, Huang \emph{et al.}\cite{Huang_Christlieb_II_2021,Huang_Christlieb_III_2023} and Porteous \emph{et al.}\cite{Hauck_2021} developed variable-moment neural-network-based fluid closures that automatically enforce hyperbolicity of the ensuing moment evolution equations.

This Article presents a framework for building variable-moment fluid closures that automatically enjoy an important structural property inherent to non-dissipative kinetic plasma models: Hamiltonian structure. The framework involves two essential components. (1) Specification of a Poisson bracket for the first $m_0$ fluid moments, where $m_0$ is any integer greater than $1$. (2) Specification of the plasma system energy as a function of the first $m_0$ fluid moments. The form of the Poisson bracket only depends on the the number of moments $m_0$ included in a desired closure. On the other hand, the form of the  energy function, or Hamiltonian $\mathcal{H}$, is free to specify. 
Parameterizing $\mathcal{H}$ using a neural network as in Hamiltonian neural networks\cite{Greydanus_2019}, or a library of candidate terms\cite{Messenger_2021,Messenger_2022}, offers a novel path toward data-driven closure modeling for non-dissipative kinetic plasmas. As presented here, the framework applies specifically to fluid closures of the Vlasov-Poisson system in $N$ space dimensions. However, similar ideas apply to other non-dissipative kinetic plasma models, such as the Vlasov-Maxwell system\cite{Morrison_VM_1980,MW_1982,Morrison_proc_1982}.

The framework still applies in the presence of dissipation caused by particle collisions since a dissipative fluid closure should respect the metriplectic\cite{Morrison_met_1986} structure of collisional kinetic theory. Notably, metriplectic dynamical systems always decompose as the sum of a Hamiltonian part and a dissipative part; the framework developed here applies to the Hamiltonian part. 

The Hamiltonian structure underlying the \emph{infinite} hierarchy of fluid moments for the Vlasov-Poisson system was first identified by Gibbons\cite{Gibbons_1981}.
However, finding true Poisson brackets for finite, but arbitrarily large collections of fluid moments for the Vlasov-Poisson system represents an outstanding theoretical challenge.   The Poisson bracket for the closure framework presented here provides the first complete solution. In the context of slab drift kinetic models, Tassi\cite{Tassi_2015,Tassi_2022,Tassi_2023} previously found Hamiltonian fluid moment closures with any number of fluid moments. But the methods underlying this result fail to extend to other dissipation-free kinetic models in any obvious manner. In the Vlasov-Poisson context, Holm and Tronci\cite{Holm_Tronci_2009} found Hamiltonian fluid moment closures that include arbitrarily-large collections of fluid moments that \emph{do not} include the density moment. Otherwise, the best available partial solutions for Vlasov-Poisson work in one space dimension $(N=1)$, where the tensorial nature of fluid moments does not appear, and fall into two categories. The first category contains Hamiltonian moment closures derived by way of direct imposition of the Jacobi identity --- Jacobi closures. The second category contains closures based on exact solutions of the parent kinetic model that may be parameterized by fluid moments --- exact reductions.

At a schematic level, derivations of Jacobi closures work by first substituting an unknown closure relation into the Poisson bracket for the parent kinetic model, and then finding conditions the closure relation needs to satisfy in order for the closure bracket to satisfy the Jacobi identity. These conditions take the form of nonlinear partial differential equations (PDEs) that, in general, do not admit solutions or cannot be solved in closed form. However, there are notable exceptions. Tassi\cite{Tassi_2014} studied Jacobi closures for drift kinetics that include two moments. Chandre and Shadwick\cite{Chandre_Shadwick_2022} found a Jacobi closure that accommodates all fluid moments with degree $k < 5$ and ensures stability of uniform solutions. This extends the earlier result\cite{PCMT_2015} of Perin \emph{et al.} that also found Hamiltonian closures with $k<5$, but without stability guarantees. The feasibility of finding Jacobi closures was first established\cite{PCMT_2014} by Perin \emph{et al.} through a Jacobi-closure derivation of the earliest Hamiltonian fluid model due to Morrison and Greene\cite{Morrison_Greene_1980}. Jacobi closures involving moments with degree $k<2$ are automatic; the algebra of functionals that depend only on density and momentum density closes under the Vlasov-Maxwell Poisson bracket. The same algebraic closure property is not automatic for models that include moments with degrees $k\geq 2$. de Guillebon and Chandre\cite{Guillebon_Chandre_2012} dramatically highlighted this last point by revealing an earlier fluid moment model\cite{STE_2004} as a false-positive; a claimed Hamiltonian fluid closure including moments with degree $k= 2$ that was provably not Hamiltonian. Jacobi closures that include fluid moments of arbitrarily large degree have not yet been discovered.

Derivations of Hamiltonian moment closures based on exact reductions work like the classical variation of parameters method. After postulating a parametric form for the distribution function $f$, evolution equations for the parameters are derived that ensure a time-evolving parametric $f$ solves the kinetic equation exactly. This procedure only works for very special parametric $f$. For example, a distribution function of the form $f(q,p) = n(q)\,\delta(p-\bm{P}(q)/n(q))$, where $n$ is a positive function and $\bm{P}$ is a vector field, solves the Vlasov-Poisson system if and only if $(n,\bm{P})$ satisfies the Euler equations with a particular barotropic equation of state --- the exact reduction. The parameters for this $\delta$-closure are just the fluid moments with degrees $k<2$; the Poisson bracket among these parameters induced by the Vlasov-Poisson bracket recovers the $k<2$ closure bracket mentioned above. 
Perin \emph{et al.}\cite{PCMT_2015_waterbag} initiated a program aimed at finding moment closures with arbitrarily high-degree moments by re-parameterizing a multiple waterbag exact reduction. (Perin \emph{et al.}\cite{Perin_DK_2016} also studied waterbags in the drift kinetic context.) Simple explicit formulas express the fluid moments of multiple waterbags as polynomials in waterbag parameters. Whenever these polynomial relationships can be inverted to give waterbag parameters in terms of fluid moments a Hamiltonian moment closure results. However, due to the general insolubility of large-degree polynomial equations, the inversion has only been demonstrated for a single waterbag. Like Jacobi closures, Hamiltonian closures based on exact reductions have yet to accommodate fluid moments of arbitrarily large degree.

The Hamiltonian moment closures presented here fall into the exact reduction category. The relevant parametric form for $f$ is given by
\begin{align}
f(q,p)&=n(q)\,\delta(p+d\psi(q)) - \mu_{i_1}(q)\,\partial_{p_{i_1}}\delta(p+d\psi(q))\nonumber\\
&+ \frac{1}{2}\,\mu_{i_1\,i_2}(q)\,\partial^2_{p_{i_1\,i_2}}\delta(p+d\psi(q)) + \dots + \frac{(-1)^{m_0}}{m_0!}\mu_{i_1\dots i_{m_0}}(q)\,\partial^{m_0}_{p_{i_1}\dots p_{i_{m_0}}}\delta(p+d\psi(q)). \label{f_ansatz_basic}
\end{align}
Here, $n$ denotes the density, the gradient of the scalar function $\psi$ determines the center of the distribution function, and $\mu_{i_1\dots i_k}$ denotes the degree-$k$ centered fluid moment. Section \ref{sec:exact_reduction} contains a detailed exposition. A simple upper-triangular transformation, Eq.\,\eqref{binomial_formula}, relates the first $m_0$ centered fluid moments to the first $m_0$ standard fluid moments. Hamiltonian closures based on this form of $f$ therefore accommodate arbitrarily high-degree moments by increasing the integer $m_0$. This feature of the closures, together with their applicability in every space dimension $N$, sets them apart from all previous Hamiltonian fluid closures that currently appear in the literature. 

The ansatz \eqref{f_ansatz_basic} can reproduce the first $m_0$ moments of \emph{any} distribution. The Poisson bracket associated with this exact reduction may therefore be used to model the first $m_0$ moments of any distribution. In particular, it is not necessary to assume that the distributions modeled by the new closures are exactly of the form \eqref{f_ansatz_basic}. This observation forms the basis for the proposed closure modeling approach, described with greater detail in Section \ref{sec:general_scheme}: fix the Poisson bracket and incorporate all modeling choices into the form of the Hamiltonian $\mathcal{H}$. However, it is important to keep in mind that the theory developed here only guarantees the existence of closures that apply to $f$ close to the form \eqref{f_ansatz_basic}. Applications of the new closures to, e.g., bump-on-tail distributions should be considered exploratory efforts.

Demonstrating that \eqref{f_ansatz_basic} comprises an exact reduction and uncovering the corresponding Poisson bracket structure for general $m_0$ and $N$ requires considerable bookkeeping technology. Fortunately, Scovel and Weinstein\cite{Scovel_Weinstein_1994} developed the necessary general machinery in the context of distributions $f$ localized near a point $z = (q,p)$ in single-particle phase space. (Channell\cite{Channell_1995} provides additional insights into elements of the Scovel-Weinstein construction.) Such distributions, which are relevant to self-consistent beam dynamics in particle accelerators, also give rise to a moment closure problem, where the moments are redefined as the phase space integrals
\begin{align*}
m^{\nu_1\dots \nu_k} = \int z^{\nu_1}\dots z^{\nu_k}f(z)dz, \quad \nu_\ell=1,\dots,2N.
\end{align*}
In order to find a finite-dimensional Poisson bracket that describes dynamics of the first $m_0$ phase space moments, Scovel-Weinstein developed a reduction strategy that can be applied to any Lie-Poisson Hamiltonian system whose associated Lie algebra $\mathfrak{g}$ decomposes as a sum of subalgebras $\mathfrak{g} = \mathfrak{b} + \mathfrak{s}$. (The Lie bracket on $\mathfrak{g}$ need not factor as the sum of brackets from $\mathfrak{b}$ and $\mathfrak{s}$.) For the beam problem, $\mathfrak{g}$ comprises the phase space functions with multiplication provided by the usual single-particle Poisson bracket, $\mathfrak{b}$ contains the affine phase space functions, and $\mathfrak{s}$ is the collection of phase space functions that vanish at $(q,p) = (0,0)$, together with their first partial derivatives. For fluid moments, the Lie algebra $\mathfrak{g}$ remains the same, but the subalgebras $\mathfrak{b},\mathfrak{s}$ differ substantially. Section \ref{sec:lie} summarizes the Lie theory necessary to apply Scovel-Weinstein theory to fluid moments, and in particular defines the subalgebras $\mathfrak{b}$ and $\mathfrak{s}$. Section \ref{sec:scovel} precisely states the elements of Scovel-Weinstein theory that enable subsequent development of the new closure method in Sections \ref{sec:PB}, \ref{sec:exact_reduction}, and \ref{sec:general_scheme}.

The following notation is used in what follows without further comment. If $V=\mathbb{R}^d$ is any finite-dimensional vector space then $C(V)$ contains smooth functions on $V$ with mild growth (e.g. at most polynomial grown) at infinity. The dual space $C^*(V)$ contains rapidly decaying distributional functions on $V$. The duality pairing between $\kappa\in C(V)$ and $\rho\in C^*(V)$ is $\langle\rho,\kappa \rangle = \int_{V} \kappa(v)\,\rho(v)\,dv$, where $dv$ denotes the Lebesgue measure on $\mathbb{R}^d$.

\section{Lie theory for the contactomorphism group\label{sec:lie}}
A complete understanding of the Hamiltonian moment closures described in this Article requires familiarity with some of the rich structure of the \textbf{contactomorphism group} $G$. This Section aims to provide a self-contained introduction to $G$ and its properties with greatest relevance to fluid moments. For this Section and Section \ref{sec:scovel} only, readers should be familiar with the basic mathematical theory of Lie groups at the level of Chapter 4 in the book\cite{Abraham_2008} by Abraham and Marsden. Marsden and Weinstein\cite{MW_1982} were first to recognize that $G$ is the Lie group that integrates the Lie algebra underlying the Vlasov equation. Subsequently, Gay-Balmaz and Tronci\cite{BT_2012} used structural properties of contactomorphisms to show that the integrable Bloch-Iserles system is geodesic flow on a subgroup of $G$. Gay-Balmaz and Tronci\cite{BT_2020,BT_2022} also provide additional exposition concerning the contactomorphism group as part of their ongoing efforts in quantum-classical hybrid modeling.

Let $Q\ni q$ be an $N$-dimensional vector space representing the configuration space for an $N$-degree of freedom Hamiltonian system. The corresponding momentum phase space is $P =Q\times Q^* \ni (q,p)$, where $Q^*$ denotes the dual vector space to $Q$. The cotangent projection map is $\pi:P\rightarrow Q:(q,p)\mapsto q$. Each $p\in Q^*$ is a linear mapping $p:Q\rightarrow\mathbb{R}$. Application of the linear map $p\in Q^*$ to a vector $q\in Q$ will be denoted $\langle p,q\rangle\in\mathbb{R}$. The \textbf{canonical $1$-form} $\vartheta$ on $P$ is defined by $\vartheta = \langle p,dq\rangle$. The corresponding \textbf{canonical symplectic form} is the $2$-form $\omega = -d\vartheta$ given by exterior differentiation of $\vartheta$. If $h:P\rightarrow\mathbb{R}$ is any smooth function on $P$, its partial derivatives $\delta h/\delta q$ and $\delta h/\delta p$ are defined by the directional derivative formula
\begin{align*}
\frac{d}{d\epsilon}\bigg|_0 h(q+\epsilon\,\delta q,p+\epsilon\,\delta p) = \left\langle \frac{\delta h}{\delta q} ,\delta q\right\rangle +  \left\langle\delta p, \frac{\delta h}{\delta p} \right\rangle.
\end{align*}
Note that the value of $\delta h/\delta q$ at a particular $(q,p)\in P$ lies in $Q^*$, while $\delta h/\delta p$ lies in $Q$. The Poisson bracket associated with the canonical symplectic form $\omega$ is given on pairs of functions $f,g:P\rightarrow\mathbb{R}$ by
\begin{align*}
\{f,g\} = \left\langle \frac{\delta f}{\delta q},\frac{\delta g}{\delta p} \right\rangle - \left\langle \frac{\delta g}{\delta q},\frac{\delta f}{\delta p} \right\rangle.
\end{align*}

A (strict) \textbf{contactomorphism} $g = (\Phi,\varphi)$ is a pair comprising a diffeomorphism $\Phi:P\rightarrow P$ and a smooth function $\varphi:P\rightarrow\mathbb{R}$ that together satisfy the identity
\begin{align*}
d\varphi + \Phi^*\vartheta = \vartheta.
\end{align*}
The product of a pair of contactomorphisms $g_1,g_2$, denoted $g_1g_2$, is 
\begin{align*}
g_1g_2 = (\Phi_1\circ \Phi_2, \Phi_2^*\varphi_1 + \varphi_2).
\end{align*}
When equipped with this product, the set of all contactomorphisms comprise a group known as the (strict) contactomorphism group.


The contactomorphisms $b = (\Phi,\varphi)$ such that
\begin{align*}
\Phi(q,p) = \bigg(q,p - \frac{\delta \psi}{\delta q}\bigg),\quad \varphi(q,p) = \psi(q),\quad \psi\in C(Q).
\end{align*}
form a subgroup $B$, known as the group of \textbf{fiber translations}. Note that if $\tau_\psi: P\rightarrow P$ denotes the mapping $(q,p)\mapsto (q,p - \delta\psi/\delta q)$, then every fiber translation may be written as $b = (\tau_\psi,\pi^*\psi)$ for some unique $\psi$. The mapping $I_B :  B\rightarrow C(Q): (\tau_\psi,\pi^*\psi)\mapsto \psi$, from $B$ to the Abelian group of smooth functions on configuration space is a group isomoprhism.

Let $Z = \{(q,p)\in P\mid p = 0\}$ denote the \textbf{zero section} in phase space. The submanifold $Z$ contains the static system states. The contactomorphisms $s= (\Phi,\varphi)$ such that $\Phi(Z) = Z$ and $\varphi(Z)=0$ form a Lie subgroup $S\subset G$. In this work, $S$ will be referred to as the \textbf{group of isostatic contactomorphisms}. 
The isostatic contactomorphisms complement the fiber translations in the following sense.

\begin{theorem}\label{G_decomp_thm}
The product mapping 
\begin{align*}
m:B\times S\rightarrow G: (b,s)\mapsto bs,
\end{align*}
restricts to a diffemorphism from a neighborhood of $(e,e)\in B\times S$ to a neighborhood of $e\in G$. In particular, for each $g\in G$ sufficiently close to the identity there is a unique fiber translation $b$ and a unique isostatic contactomorphism $s$ such that $g=bs$.
\end{theorem}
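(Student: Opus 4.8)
The plan is to prove the statement constructively, producing the factors $b$ and $s$ explicitly from $g$ and thereby exhibiting a smooth local inverse to $m$; this sidesteps the analytic delicacy of an inverse-function-theorem argument on the infinite-dimensional group $G$. As orientation, note that at the Lie-algebra level the tangent map $T_{(e,e)}m$ is the addition map $\mathfrak{b}\times\mathfrak{s}\to\mathfrak{g}$, and the fact that $\mathfrak{g}=\mathfrak{b}\oplus\mathfrak{s}$ (a function on $P$ splits uniquely into its restriction to $Z$, carried off $Z$ as a $q$-function, plus a remainder vanishing on $Z$) already signals that $m$ should be a local diffeomorphism. Rather than invoke a Nash--Moser-type theorem to upgrade this to the group, I would build the inverse by hand.

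First I would dispose of uniqueness, which is global and elementary: if $g=b_1s_1=b_2s_2$ then $b_2^{-1}b_1=s_2s_1^{-1}\in B\cap S$, and a fiber translation $b=(\tau_\psi,\pi^*\psi)$ lies in $S$ only if $\pi^*\psi|_Z=\psi=0$, so $B\cap S=\{e\}$ and hence $b_1=b_2$, $s_1=s_2$. For existence, write $g=(\Phi,\varphi)$ and seek $b=(\tau_\psi,\pi^*\psi)$ so that $s:=b^{-1}g=(\tau_{-\psi}\circ\Phi,\ \varphi-\psi\circ\pi\circ\Phi)$ lies in $S$. Setting $\Psi=\tau_{-\psi}\circ\Phi$, the condition $\Psi(Z)=Z$ unwinds to the requirement that the $q$-gradient of $\psi$ match a prescribed $Q^*$-valued function along the image of $Z$; precisely, writing $\beta=\pi\circ\Phi\circ i_Z$ for the configuration-space map induced by $\Phi$ on the zero section $i_Z:q\mapsto(q,0)$, one needs $\delta\psi/\delta q$ to equal minus the momentum component of $\Phi\circ i_Z$ pulled back through $\beta^{-1}$. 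This is a gradient equation for $\psi$, solvable precisely when the corresponding $1$-form on $Q$ is exact.

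The crux of the argument, and the step I expect to be the main obstacle, is establishing that exactness. Here the contact condition does the work: applying $i_Z^*$ to the defining identity $d\varphi+\Phi^*\vartheta=\vartheta$ and using $i_Z^*\vartheta=0$ (since $\vartheta=\langle p,dq\rangle$ vanishes on $Z$) gives $(\Phi\circ i_Z)^*\vartheta=-d(\varphi|_Z)$, which is manifestly exact. Since this pullback equals $\beta^*$ of the very $1$-form one must integrate, and since $\beta$ is a diffeomorphism of $Q$ whenever $g$ is close enough to $e$, the form is exact and the explicit solution is $\psi=\varphi|_Z\circ\beta^{-1}$ (up to an irrelevant constant). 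With this $\psi$ in hand I would verify the two defining conditions of $S$ directly: $\Psi(q,0)=(\beta(q),0)$ so $\Psi(Z)=Z$, and the function part satisfies $\sigma(q,0)=\varphi|_Z(q)-\psi(\beta(q))=0$. Finally, the assignment $g\mapsto(b,s)$ is built from $\Phi$, $\varphi$, the smooth inversion $\beta\mapsto\beta^{-1}$, and algebraic operations in $G$, hence is smooth on the neighborhood where $\beta$ is invertible; being a two-sided inverse of $m$ there, it certifies that $m$ restricts to a diffeomorphism near $(e,e)$.
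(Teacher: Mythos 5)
Your proof is correct, and its overall architecture matches the paper's: both construct the inverse of $m$ explicitly on the neighborhood of contactomorphisms $g=(\Phi,\varphi)$ for which $\Phi(Z)$ is a graph over $Q$ (your condition that $\beta=\pi\circ\Phi\circ i_Z$ be a diffeomorphism is exactly that condition), by producing $\psi$ with $b=(\tau_\psi,\pi^*\psi)$ and $s=b^{-1}g\in S$. You diverge in the two key steps, and in both cases your route is defensible. First, your uniqueness argument ($b_2^{-1}b_1=s_2s_1^{-1}\in B\cap S=\{e\}$) is global and purely group-theoretic; the paper obtains uniqueness only locally, as a byproduct of its construction of $\psi(g)$. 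Second, the paper produces $\psi$ by invoking generating-function theory for Lagrangian submanifolds ($\Phi(Z)$ is Lagrangian because the contact identity forces $\Phi^*\omega=\omega$, and a Lagrangian graph over the simply connected $Q$ is the graph of an exact one-form), and then fixes the additive constant by a separate argument: pulling the contact identity for $s$ back to $Z$ shows $\varphi_s|_Z$ is constant, and the constant in $\psi$ is chosen to kill it. You bypass the Lagrangian machinery entirely: pulling the contact identity for $g$ itself back along $i_Z$ gives $(\Phi\circ i_Z)^*\vartheta=-d(\varphi|_Z)$, which simultaneously proves exactness of the one-form you must integrate and hands you the closed-form primitive $\psi=\varphi|_Z\circ\beta^{-1}$ with the correct normalization built in. Your version is more self-contained and explicit; the paper's makes the underlying symplectic geometry (why a generating function exists at all) visible, which matters for readers who want to see $\Phi(Z)$ as a Lagrangian graph. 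One caution: describing the additive constant as ``irrelevant'' is inaccurate --- replacing $\psi$ by $\psi+c$ with $c\neq 0$ yields a fiber translation $b'$ for which $(b')^{-1}g\notin S$, and this constant freedom is precisely what the paper spends a paragraph eliminating; your proof survives only because you then commit to the specific choice $\psi=\varphi|_Z\circ\beta^{-1}$ and verify $\sigma(q,0)=\varphi|_Z(q)-\psi(\beta(q))=0$ with that choice.
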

\begin{proof}
The proof proceeds by constructing the inverse of $m_0$ on an appropriate neighborhood $U_G$ of the identity in $G$. In particular, given $g\in G$ the key step is finding a cooresponding $b = b(g)$ such that $s(g) = b(g)^{-1}g\in S$. The inverse of $m_0$ is then $m^{-1}(g)=(b(g),s(g))$, since $m(b(g),s(g)) = b(g)s(g) = b(g)(b(g)^{-1}g) = g$. 

Observe that the zero section $Z$ is a graph over $Q$. Let $U_G\subset G$ denote the set of contactomorphisms $g = (\Phi,\varphi)$ such that $\Phi(Z)$ is a graph over $Q$. Given $g\in U_G$, the graph $\Phi(Z)$ is a Lagrangian submanifold of $P$. There is therefore a function $\psi:Q\rightarrow \mathbb{R}$ on configuration space such that $\Phi(Z)$ is the graph of (minus) the differential of $\psi$, 
\[
\Phi(Z) = \{(q,p)\in P\mid p = -\delta \psi/\delta q\}.
\]
In other words, $\psi$ is (minus) a generating function for the Lagrangian submanifold $\Phi(Z)$. A generating function $\psi$ for $\Phi(Z)$ determines a fiber translation $b = (\tau_\psi,\pi^*\psi)$, which very nearly furnishes the $b = b(g)$ mentioned above. However, there is more to do because $(\tau_\psi,\pi^*\psi)^{-1}g$ is not necessarily in $S$. Moreover, the fiber translation $(\tau_\psi,\pi^*\psi)$ is not uniquely determined by $g$. In fact, $\psi^\prime = \psi + c$, $c\in \mathbb{R}$, also generates $\Phi(Z)$, but determines a different fiber translation $(\tau_{\psi^\prime},\pi^*\psi^\prime) = (\tau_\psi,\pi^*\psi + c)$.

The following argument shows that for each $g\in U_G$ there is a unique generating function $\psi = \psi(g)$ for $\Phi(Z)$ such that $s = (\tau_\psi,\pi^*\psi)^{-1}g\in S$. Let $\iota_Q:Q\rightarrow P: q\mapsto (q,0)$ denote the map that embeds $Q$ in $P$ as $Z$, and set $ (\Phi_s,\varphi_s)= s = (\tau_\psi^{-1}\circ\Phi,-\Phi^*\pi^*\psi + \varphi)$. As mentioned already, there exists a (non-unique) $\psi$ that generates $\Phi(Z)$. Since $\Phi(Z)$ is the graph of $-\delta \psi/\delta q$, the mapping $\tau_{\psi}^{-1}$ maps the Lagrangian submanifold $\Phi(Z)$ back to $Z$, i.e. $\Phi_s(Z) = \tau_{\psi}^{-1}(\Phi(Z)) = Z$. Equivalently, there is a diffeomorphism $\eta_s:Q\rightarrow Q$ such that $\Phi_s\circ\iota_Q = \iota_Q\circ \eta_s$. Since $s$ is a contactomorphism, $d\varphi_s + \Phi_s^*\vartheta = \vartheta$. Pulling back this identity along $\iota_Q$ implies
\[
0 = \iota_Q^*\vartheta = \iota_Q^*(d\varphi_s + \Phi_s^*\vartheta) = d\iota_Q^*\varphi_s + (\Phi_s\circ\iota_Q)^*\vartheta = d\iota_Q^*\varphi_s + (\iota_Q\circ\eta_s)^*\vartheta =d\iota_Q^*\varphi_s + \eta_s^*\iota_Q^*\vartheta = d\iota_Q^*\varphi_s. 
\]
This means $\varphi_s(q,0) = \varphi(q,0) - \psi(\eta_s(q)) $ is constant as a function of $q$. Eliminate the ambiguity in the definition of $\psi$ by requiring that $\varphi_s(q,0) = 0$. Once this is done, $b=b(g) = (\tau_{\psi(g)},\pi^*\psi(g))$ is uniquely determined by $g$ and $b^{-1}g\in S$, as desired. The inverse of $m_0$ on $U_G$ is now $m^{-1}(g) = (b(g),s(g))$, with $s(g) = b(g)^{-1}g$.
\end{proof}

The Lie algebra $\mathfrak{g} = T_eG$ of \textbf{infinitesmial contactomorphisms} comprises pairs $(X,\Sigma)$, where $X$ is a vector field on $P$, $\Sigma:P\rightarrow\mathbb{R}$ is a function on $P$, and $d\Sigma +\mathcal{L}_X\vartheta = 0.$
The adjoint action of $G$ and Lie bracket on $\mathfrak{g}$ are given by
\begin{align*}
\text{Ad}_g(X,\Sigma) &= (\Phi_*X,\Phi_*(\Sigma + \mathcal{L}_X\varphi)),\quad g = (\Phi,\varphi),\\
[(X_1,\Sigma_1),(X_2,\Sigma_2)] &= (-[X_1,X_2],-\mathcal{L}_{X_1}\Sigma_2 +\mathcal{L}_{X_2}\Sigma_1).
\end{align*}
The linear map $I_{\mathfrak{g}}:\mathfrak{g}\rightarrow C(P):(X,\Sigma)\mapsto \Sigma + \vartheta(X)$, from infinitesimal contactomorphisms to phase space functions under Poisson bracket
is a Lie algebra isomorphism. This observation recovers the well-known fact that the contactomorphism group integrates the Lie algebra of phase space functions equipped with the Poisson bracket. 

The Lie subalgebra $\mathfrak{b} = T_{e}B\subset\mathfrak{g}$ of \textbf{infinitesimal fiber translations} comprises pairs $(X,\Sigma)\in \mathfrak{g}$, where $\Sigma = \pi^*\sigma$ for some $\sigma\in C(Q)$ and $X = X_{\pi^*\sigma}$. The Lie bracket on $\mathfrak{b}$ vanishes because $B$ is Abelian. When identifying $\mathfrak{g}$ with phase space functions \emph{via} $I_{\mathfrak{g}}$, $\mathfrak{b}$ corresponds to $p$-independent functions.
Denote the image Lie subalgebra of $\mathfrak{b}$ in $(C(P),\{\cdot,\cdot\})$ as 
\begin{align*}
C_{\mathfrak{b}}(P) = \{h\in C(P)\mid \delta h/\delta p = 0\}.
\end{align*}

The Lie subalgebra $\mathfrak{s} = T_{e}S\subset\mathfrak{g}$ of \textbf{isostatic infinitesimal contactomorphisms} comprises pairs $(X,\Sigma)\in\mathfrak{g}$ such that $X$ is tangent to $Z$ and $\Sigma(Z) = 0$. (In fact, the latter condition implies the former, and so can be taken as the sole property characterizing $\mathfrak{s}$ as a subspace of $\mathfrak{g}$.) The image Lie subalgebra $I_{\mathfrak{g}}(\mathfrak{s})$ in $(C(P),\{\cdot,\cdot\})$ is
\begin{align*}
C_{\mathfrak{s}}(P) = \{h\in C(P)\mid h(Z) = 0\}.
\end{align*}

Theorem \ref{G_decomp_thm} implies that, as a vector space, the Lie algebra of infinitesimal contactomorphisms is the direct sum of $\mathfrak{b}$ and $\mathfrak{s}$, $\mathfrak{g} = \mathfrak{b} + \mathfrak{s}$. The corresponding statement in $C(P)$, namely $C(P) = C_{\mathfrak{b}}(P) + C_{\mathfrak{s}}(P)$, may be understood more simply as a byproduct of Taylor's theorem with remainder. 

The dual space $\mathfrak{g}^*$ comprises linear functions $\mu : \mathfrak{g}\rightarrow \mathbb{R}$. Since $\mathfrak{g} = \mathfrak{b} + \mathfrak{s}$, the duals to $\mathfrak{b}$ and $\mathfrak{s}$ are subspaces of $\mathfrak{g}^*$ given by
\begin{align*}
\mathfrak{b}^* = \{\mu\in \mathfrak{g}^*\mid \langle\mu,\mathfrak{s} \rangle = 0\},\quad \mathfrak{s}^* = \{\mu\in \mathfrak{g}^*\mid \langle\mu,\mathfrak{b} \rangle = 0\}.
\end{align*}
Similarly, the dual space to $C(P)$, $C^*(P)$, contains linear functions ${f}:C(P)\rightarrow\mathbb{R}$, and the duals $C_{\mathfrak{b}}^*(P)$, $C_{\mathfrak{s}}^*(P)$, are the subspaces of $C^*(P)$ given by
\begin{align*}
C_{\mathfrak{b}}^*(P)=\{{f}\in C^*(P)\mid \langle {f},C_{\mathfrak{s}}(P)\rangle=0\},\quad C_{\mathfrak{s}}^*(P)=\{{f}\in C^*(P)\mid \langle {f},C_{\mathfrak{b}}(P)\rangle=0\}.
\end{align*}
The coadjoint action of $G$ on $\mathfrak{g}^*$ is given by $\text{Ad}^*_g:\mathfrak{g}^*\rightarrow\mathfrak{g}^*: (X^*,\varphi^*)\mapsto (\text{Ad}_{g^{-1}})^*(X^*,\varphi^*)$. The corresponding coadjoint action of $G$ on $C^*(P)$ is
\begin{align*}
\text{Ad}^*_g{f} = \Phi_*{f},\quad {f}\in C^*(P),\quad g=(\Phi,\varphi)\in G.
\end{align*}
Provided the space of phase space functions is restricted to smooth functions with slow growth at infinity, the dual $C^*(P)$ contains all of the typical smooth Vlasov densities. It includes more exotic functionals as well, such as Dirac $\delta$-functions and their derivatives.

\section{Scovel-Weinstein theory\label{sec:scovel}}
The (infinite-dimensional) Hamiltonian structure underlying the Vlasov-Poisson system is dual to the Lie algebra $C(P)$ of phase space functions under the standard (finite-dimensional) Poisson bracket. The problem of finding reduced models, in particular reduced fluid models, for Vlasov-Poisson may therefore be approached at the level of the algebra $C(P)$. Adopting this point of view requires addressing how the Lie algebra $C(P)$ might be ``reduced." For the purposes of this Article, any reduction should somehow encode the  objects dual to the first $m_0$ fluid moments, namely the phase space functions $h(q,p)$ that are polynomial in momentum $p$ with degree at most $m_0$.

Plausible reductions of $C(P)$ include proper Lie subalgebras: subspaces of $C(P)$ that close under Poisson bracket. For example, the phase space functions that only depend on the configuration variable $q$ (i.e. degree-$0$ momentum polynomials) comprise an Abelian Lie subalgebra dual to the space of density moments. More generally, the momentum polynomials with degree at most $1$ close under Poisson bracket because
\begin{align*}
\left\{\bigg(c_1(q)+c_1^i(q)p_i\bigg),\bigg(c_2(q) + c_2^i(q)p_i\bigg)\right\} = -\bigg(c_1^i(q)\partial_ic_2(q)-c_2^i(q)\partial_ic_1(q)\bigg) - \bigg(c_1^j(q)\partial_jc_2^i(q) - c_2^j(q)\partial_j c_1^i(q)\bigg)p_i.
\end{align*}
However, by itself, subalgebra reduction of $C(P)$ provides insufficient flexibility to accommodate fluid moments with degree between $0$ and $m_0$ when $m_0 > 1$. This follows from the failure of momentum polynomials with degree at most $2$ to close under Poisson bracket.

Quotient algebra reduction offers a more nuanced alternative to subalgebra reduction. Instead of attempting to encapsulate moments in a small subalgebra, quotient algebra reduction aims to find \emph{large} ideals $\mathcal{I}$ in $C(P)$ such that the quotient space $C_{\mathcal{I}}(P) = C(P)/\mathcal{I}$ encodes fluid moments. Here, ideals are subalgebras with an additional absorbing property: the bracket of functions in $\mathcal{I}$ with \emph{arbitrary} functions in $C(P)$ must be contained in $\mathcal{I}$. The subspace $\mathcal{I}$ being an ideal ensures that the quotient $C_{\mathcal{I}}(P)$ inherits a Lie algebra structure from $C(P)$. Unfortunately, as for subalgebra reduction, quotient algebra reduction cannot accommodate the fluid moments with degree between $0$ and $m_0$ by itself. As described in detail within the proof of Theorem \ref{moment_LA_thm}, it \emph{can} successfully encode moments with degree between $1$ and $m_0$.

The paper\cite{Scovel_Weinstein_1994} by Scovel and Weinstein devises an alternative Lie-theoretic reduction strategy that hybridizes subalgebra reduction and quotient algebra reduction. Developed originally for application to phase space moments, Scovel-Weinstein theory adapts to the fluid moment context in straightforward fashion. In fact, their methods apply any time a Lie group $G$ admits a decomposition by Lie subgroups $B,S\subset G$. Here, decomposition means that for each $g\in G$ there are unique $b\in B$ and $s\in S$ such that $g = bs$. In the context of fluid moments, the group $G$ is the contactomorphism group, $B$ is the group of fiber translations, and $S$ is the group of isostatic contactomorphisms. Each of these groups is described with some detail in Section \ref{sec:lie}. The technical heart of Scovel-Weinstein theory may be described as follows.

Suppose that $G$ is a Lie group that admits a decomposition by Lie subgroups $B,S\subset G$, $G = BS$. The Lie algebra $\mathfrak{g}$ then admits a corresponding direct sum decomposition $\mathfrak{g} = \mathfrak{b} + \mathfrak{s}$ by Lie subalgebras. Likewise, the dual space to $\mathfrak{g}$ admits the direct sum decomposition $\mathfrak{g}^* = \mathfrak{b}^* + \mathfrak{s}^*$. The \textbf{Scovel-Weinstein map} $\Gamma:(B\times\mathfrak{b}^*) \times \mathfrak{s}^*\rightarrow\mathfrak{g}^*$ is given by
\begin{align*}
\Gamma(b,b^*,s^*) = \text{Ad}^*_{b}(b^* + s^*).
\end{align*}
Here $\text{Ad}^*_g=(\text{Ad}_{g^{-1}})^*$ denotes the (left) coadjoint action of $G$ on $\mathfrak{g}^*$. Let $\{\cdot,\cdot\}_{T^*B}$ denote the Poisson bracket on $B\times \mathfrak{b}^*$ given by pulling back the canonical Poisson bracket on $T^*B$ along the left trivialization map. Let $\{\cdot,\cdot\}_{\mathfrak{s}^*}$ denote the Lie-Poisson bracket on $\mathfrak{s}^*$. Finally, let $\{\cdot,\cdot\}_{\mathfrak{g}^*}$ denote the Lie-Poisson bracket on $\mathfrak{g}^*$. The central result due to Scovel-Weinstein is the following. When $(B\times\mathfrak{b}^*) \times \mathfrak{s}^*$ is equipped with the evident product Poisson manifold structure, the Scovel-Weinstein map $\Gamma$ is a Poisson map. In other words, the direct product Poisson bracket is compatible with the Lie-Poisson bracket on $\mathfrak{g}$ according to
\begin{align*}
\{\Gamma^*f,\Gamma^*g\}_{T^*B} + \{\Gamma^*f,\Gamma^*g\}_{\mathfrak{s}^*} = \Gamma^*\{f,g\}_{\mathfrak{g}^*},\quad f,g\in C(\mathfrak{g}^*).
\end{align*}
Note that the space $(B\times\mathfrak{b}^*)\times\mathfrak{s}^*$ is larger that $\mathfrak{g}^*$ by the factor $B$. This extra variable $b$ is the price paid for ``untangling" the Lie-Poisson structure on $\mathfrak{g}^*$ in terms of standard Poisson structures associated with $\mathfrak{b}^*$ and $\mathfrak{s}^*$. Scovel-Weinstein advocate the following interpretation of $\Gamma$ and its source and target spaces. In the space $(B\times\mathfrak{b}^*)\times \mathfrak{s}^*$ the factor $\mathfrak{g}_0^* = \mathfrak{b}^*\times\mathfrak{s}^*$ represents a ``reference" copy of $\mathfrak{g}^*$ sitting at the identity $e$ in the group $B$. The mapping $\Gamma$ advects the reference space $\mathfrak{g}_0^*$ from $e$ to $b$.

This advection interpretation warrants further elaboration. Along a solution of the Vlasov-Poisson system, the distribution function $f(q,p)$ is advected by canonical transformations that prescribe the motion of individual particles in phase space. These canonical transformations naturally reside with the contactomorphism group $G$. The Scovel-Weinstein map $\Gamma$ may therefore be interpreted as a general way of identifying special system motions advected by a subgroup $B\subset G$.

Now suppose that the Lie subalgebra $\mathfrak{s}$ contains an ideal $\mathcal{I}\subset \mathfrak{s}$. Note that ideals in $\mathfrak{s}$ are not necessarily ideals in $\mathfrak{g}$! The \textbf{quotient algebra} $\mathfrak{s}_{\mathcal{I}} = \mathfrak{s}/\mathcal{I}$ then inherits a Lie algebra structure from $\mathfrak{s}$ by requiring that the quotient map $\Pi_{\mathcal{I}}:\mathfrak{s}\rightarrow\mathfrak{s}_{\mathcal{I}}$ is a Lie algebra homomorphism. If $s_1+\mathcal{I},s_2+\mathcal{I}\in\mathfrak{s}_{\mathcal{I}}$ are elements of the quotient space their Lie bracket is
\begin{align*}
[s_1+\mathcal{I},s_2+\mathcal{I}]= [s_1,s_2]+\mathcal{I}.
\end{align*}
Scovel-Weinstein highlight that since $\Pi_{\mathcal{I}}$ is a Lie algebra homomorphism, the dual map $\Pi_{\mathcal{I}}^*:\mathfrak{s}^*_{\mathcal{I}}\rightarrow \mathfrak{s}^*$ is a Poisson map. Therefore the map
\begin{align*}
\text{id}\times \Pi_{\mathcal{I}}^*: (B\times \mathfrak{b}^*)\times \mathfrak{s}^*_{\mathcal{I}}\rightarrow (B\times \mathfrak{b}^*)\times \mathfrak{s}^*,
\end{align*}
is Poisson as well, relative to the product Poisson manifold structures on the source and target spaces. It follows that there is a second Poisson map related to $\Gamma$ \emph{via} the quotient algebra construction. The \textbf{reduced Scovel-Weinstein map} $\Gamma_{\mathcal{I}}$ is the Poisson map
\begin{align*}
\Gamma_{\mathcal{I}}: (B\times \mathfrak{b}^*)\times \mathfrak{s}^*_{\mathcal{I}}\rightarrow\mathfrak{g}^*
\end{align*}
given as the composition $\Gamma_{\mathcal{I}} = \Gamma\circ [\text{id}\times \Pi_{\mathcal{I}}^*]$. Theorem \ref{thm:sw_map_for_moments} below represents a restatement of the Poisson map property for $\Gamma_{\mathcal{I}}$ within the fluid moment context.

\section{Poisson bracket on the space of truncated fluid moment vectors\label{sec:PB}}
This Section uses Poisson-geometric and Lie-theoretic methods to identify a Poisson bracket on a space that encodes density, fluid momentum, pressure, and higher fluid moments up to some truncation order $m_0$. The precise relationship between the Poisson bracket constructed here and the Lie Poisson bracket on the space of distribution functions $f\in C^*(P)$ will be described in Section \ref{sec:exact_reduction}.

A degree-$k$ \textbf{fluid moment} $M^k$, $k\in\{0,1,2,\dots\}$, is a symmetric, covariant, degree-$k$ tensor field on configuration space $Q$. (Geometrically, fluid moments are naturally tensor densities\cite{Tronci_geom_2008} instead of ordinary tensors, but here these two kinds of objects are identified using the standard flat metric on configuration space.) If $\{q^i\}_{i=1,\dots,N}$ denotes the standard coordinate system on $Q=\mathbb{R}^N$, $M^k$ may be written as
\begin{align*}
M^k  = M_{i_1\,\dots\,i_k}\,dq^{i_1}\otimes \dots \otimes dq^{i_k},\quad M_{i_1\,\dots\,i_k}\in C^*(Q).
\end{align*}
Any phase space distribution $f=f(q,p)$ that decays sufficiently fast as $|p|\rightarrow\infty$ determines fluid moments of each degree by the formula
\begin{align*}
M_{i_1\dots i_k}(q) = \int p_{i_1}\dots p_{i_k}\,f(q,p)\,dp.
\end{align*}
Conversely, under certain conditions a sequence of fluid moments $M^k$, $k=0,\dots,\infty$, determines a distribution $f$. This is true, for example, when the moment generating function associated with the sequence $M^k$ exists. A \textbf{fluid moment vector} with degree between $n\geq 0$ and $m\geq n$ is a tuple $\bm{M} = (M^n,M^{n+1},\dots,M^{m})$ whose $i^{\text{th}}$ entry is a fluid moment of degree $i$. Denote the set of all such fluid moment vectors $\mathcal{M}_n^{m_0}$. Note that $\mathcal{M}_{k}^k$ is the vector space of degree-$k$ fluid moments, and that
\begin{align*}
\mathcal{M}_{n}^{m_0} = \bigoplus_{k=n}^{m_0} \mathcal{M}_{k}^k.
\end{align*}
Also note that elements of $\mathcal{M}_{0}^0$ are just fluid densities $n\in C^*(Q)$.

Although specifying moments of each degree $k=0,\dots,\infty$ formally specifies a unique phase space distribution, specifying only the moments with degree $k\leq m_0$ does not. Additional data must be supplied to recover a distribution function from a fluid moment vector in $\mathcal{M}_0^{m_0}$. In this work that additional data will comprise a scalar configuration space function $\psi\in C(Q)$ whose derivative $d\psi(q)$ specifies the ``center" of the distribution $f(q,p)$ at $q$. The precise goal of this Section is constructing a Poisson bracket on the space 
\begin{align*}
P_{m_0} = C(Q)\times \mathcal{M}_0^{m_0} = C(Q)\times C^*(Q) \times \mathcal{M}_1^{m_0}\ni (\psi,n,\bm{\mu}),
\end{align*}
for any non-negative integer $m_0$. Here the symbol $\bm{\mu}$ is used in place of $\bm{M}$ to emphasize that the moment vector $\bm{\mu}$ contains \emph{centered moments}.

The Poisson bracket on $P_{m_0}$ originates from a Lie algebra structure on the objects dual to fluid moment vectors: fluid comoment vectors. A \textbf{fluid comoment} $C^k$ of degree $k\in\{0,1,2,\dots\}$ is a symmetric, contravariant, degree-$k$ tensor field on configuration space $Q$. If $\{e_i\}_{i\in\{1,\dots,N\}}$ denotes the standard basis for $Q$, evey degree-$k$ fluid comoment may be expressed using index notation as
\begin{align*}
C^k = C^{i_1\,i_2\,\dots\, i_k} e_{i_1}\otimes e_{i_2}\otimes\dots\otimes e_{i_k},\quad C^{i_1\,i_2\,\dots\, i_k}\in C(Q),
\end{align*}
where the array of component functions $C^{i_1\,i_2\,\dots\, i_k}$ is symmetric under index permutation.
A \textbf{fluid comoment vector} with degree between $n\geq 0$ and $m\geq n$ is a tuple $\bm{C} = (C^n,C^{n+1},\dots,C^{m})$ whose $i^{\text{th}}$ entry is a fluid comoment of degree $i$. The vector space of all fluid comoment vectors with degree between $n$ and $m_0$ is $\mathcal{C}_n^{m_0}$. Fluid comoments with degrees $k=0,1,2$ will be denoted using the special symbols $\sigma, \bm{w}, \mathbb{Q}$, respectively. The duality pairing between fluid moment vectors $\mathcal{M}_n^{m_0}$ and fluid comoment vectors $\mathcal{C}_n^{m_0}$ is given by
\begin{align*}
\langle \bm{M},\bm{C}\rangle = \sum_{k=n}^{m_0} \int M_{i_1\dots i_k}(q)\,C^{i_1\dots i_k}(q)\,dq.
\end{align*}
By way of this pairing, the fluid moment vectors arise as the dual space to fluid comoment vectors, $\mathcal{M}_n^{m_0} = (\mathcal{C}_n^{m_0})^*$. 

Fluid comoment vectors with degree between $1$ and $m_0\geq 1$ comprise a Lie algebra. This result is striking because the comoment vectors with degree between $0$ and $m_0$ do not comprise a Lie algebra! The following Theorem exhibits the relevant Lie bracket and proves that it satisfies the Jacobi identity.
\begin{theorem}\label{moment_LA_thm}
Let $m_0\geq 1$ be an integer. The space $\mathcal{C}_1^{m_0}$ of fluid comoment vectors with degree between $1$ and $m_0$ is a Lie algebra. The Lie bracket between pairs of comoment vectors $\bm{C}_1,\bm{C}_2$ is given by
\begin{align}
[\bm{C}_1,\bm{C}_2]_{m_0} &= (B^1,B^2,\dots,B^{m_0})\nonumber\\
B^k &= \sum_{k_1+k_2-1 = k} \bigg(k_2\,\partial_{q^I}C_1^{(i_1\dots i_{k_1}}C_2^{i_{k_1 + 1}\dots i_{k})I}-k_1\,\partial_{q^I}C_2^{(i_1\dots i_{k_2}}C_1^{i_{k_2 + 1}\dots i_{k})I}\bigg)\,e_{i_1}\otimes \dots\otimes e_{i_k},\label{s_bracket}
\end{align}
where the sum is over all integer pairs $(k_1,k_2)$ with $k_1,k_2\geq 1$ and $k_1+k_2-1 = k$.
\end{theorem}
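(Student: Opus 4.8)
The plan is to realize the bracket $[\cdot,\cdot]_{m_0}$ as the Lie bracket induced on a quotient algebra, so that the Jacobi identity comes for free and the explicit formula \eqref{s_bracket} emerges as a \emph{truncated} phase space Poisson bracket. First I would fix the linear identification between comoment vectors and momentum polynomials: to $\bm{C}=(C^1,\dots,C^{m_0})\in\mathcal{C}_1^{m_0}$ assign the phase space function $h_{\bm{C}}(q,p)=\sum_{k=1}^{m_0}C^{i_1\dots i_k}(q)\,p_{i_1}\dots p_{i_k}\in C(P)$. Since the comoment arrays are symmetric, this is a linear isomorphism of $\mathcal{C}_1^{m_0}$ onto the momentum polynomials of degree between $1$ and $m_0$, i.e. the functions vanishing on the zero section $Z$ with no monomials of degree exceeding $m_0$. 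No information is lost because the symmetry of $C^{i_1\dots i_k}$ matches the symmetry of $p_{i_1}\dots p_{i_k}$.

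The engine of the proof is a degree count for the canonical bracket. If $f$ vanishes to order $a$ and $g$ to order $b$ in $p$ along $Z$, then since $\{f,g\}=\langle\delta f/\delta q,\delta g/\delta p\rangle-\langle\delta g/\delta q,\delta f/\delta p\rangle$ differentiates exactly one factor once in $p$, each term vanishes to order at least $a+b-1$. Two consequences follow. First, the functions vanishing on $Z$, namely $C_{\mathfrak{s}}(P)$, form a Lie subalgebra, since $a,b\geq 1$ forces $a+b-1\geq 1$. Second, within $C_{\mathfrak{s}}(P)$ the functions vanishing to order greater than $m_0$ along $Z$ form an ideal $\mathcal{I}_{m_0}$: taking $a\geq 1$ and $b\geq m_0+1$ gives $a+b-1\geq m_0+1$. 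It is essential that the subalgebra contains no degree-$0$ elements; a function of $q$ alone ($b=0$) would lower the order of a high-order function back into range, which is exactly why the larger space $\mathcal{C}_0^{m_0}$ that includes the density comoment fails to close, and what makes the present statement striking.

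With these structures the quotient $C_{\mathfrak{s}}(P)/\mathcal{I}_{m_0}$ is a Lie algebra by the quotient-algebra construction recalled in Section \ref{sec:scovel}, and it inherits the Jacobi identity from $(C(P),\{\cdot,\cdot\})$ automatically. Each class has a unique momentum-polynomial representative of degree between $1$ and $m_0$, so Taylor's theorem furnishes an isomorphism $C_{\mathfrak{s}}(P)/\mathcal{I}_{m_0}\cong\mathcal{C}_1^{m_0}$. Transporting the quotient bracket through this isomorphism yields a bracket computed by taking the full Poisson bracket $\{h_{\bm{C}_1},h_{\bm{C}_2}\}$ and discarding all monomials of degree exceeding $m_0$. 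It then remains to check this truncated bracket equals \eqref{s_bracket}. I would expand $\{h_{\bm{C}_1},h_{\bm{C}_2}\}$ monomial by monomial: differentiating $p_{i_1}\dots p_{i_{k_2}}$ in $p_I$ produces the factor $k_2$ and contracts one index into $\partial_{q^I}C_1$, symmetrizing the surviving free momentum indices reproduces the round-bracket symmetrization, and collecting the degree-$k$ part selects the pairs with $k_1+k_2-1=k$, giving $B^k$; the pairs with $k_1+k_2-1>m_0$ are precisely those absorbed into $\mathcal{I}_{m_0}$.

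The conceptual core — that truncation at order $m_0$ is a quotient by an \emph{ideal} rather than a mere projection — is what makes the Jacobi identity free, and the only genuine obstacle is establishing the ideal property cleanly, which the degree count above settles. The remaining effort is the index bookkeeping of the last step, where the pitfalls are the combinatorial factors $k_1,k_2$ arising from the momentum derivatives and the correct placement of the symmetrization over the free indices $i_1,\dots,i_k$. Because Jacobi is guaranteed structurally, this computation need only confirm that \eqref{s_bracket} names the same bilinear operation as the quotient bracket, not re-establish its algebraic properties.
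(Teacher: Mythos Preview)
Your proposal is correct and follows essentially the same route as the paper: identify $\mathcal{C}_1^{m_0}$ with momentum polynomials, show $C_{\mathfrak{s}}(P)$ is a subalgebra and the high-order functions $\mathcal{I}_{m_0}$ form an ideal inside it, pass to the quotient to inherit Jacobi, and then read off \eqref{s_bracket} by computing the Poisson bracket on monomials and truncating. Your degree-count argument for the ideal property is a slightly more streamlined presentation of what the paper verifies term by term, but the underlying strategy is identical.
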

\begin{remark}
This formula for the Lie bracket on $\mathcal{C}_1^{m_0}$ reproduces the indicial form of the so-called Schouten concomitant\cite{Bloore_1979,Tronci_geom_2008} between symmetric contravariant tensors, modulo degree $m_0+1$. This observation is consistent with the method of proof given below since the Schouten concomitant is the unique Lie algebra structure on symmetric tensors that makes the map sending symmetric tensors to momentum polynomials a Lie algebra homomorphism. However, note that the symmetric contravariant tensors with degree between $1$ and $m_0$ are \emph{not} closed under the Schouten concomitant. Instead, these tensors possess a quotient Lie algebra structure. A similar quotient algebra structure was first noticed by Channell\cite{Channell_1995} in the context of phase space moments. Partially motivated by Channell's observations, Holm and Tronci\cite{Holm_Tronci_2009} found the Lie-Poisson bracket associated with \eqref{s_bracket} without identifying its corresponding Lie-algebraic origin. Therefore, Theorem \ref{moment_LA_thm} may be formulated in the following alternative manner: the Poisson bracket for fluid moments with degree between $1$ and $m_0$ presented in Holm and Tronci's Section 2.3 is a Lie Poisson bracket for the Lie algebra whose bracket is given by \eqref{moment_LA_thm}.
\end{remark}
\begin{proof}
First observe that for any non-negative integers $n_1\leq n_2$ there is a linear isomorphism $I_{n_1}^{n_2}:\mathcal{C}_{n_1}^{n_2}\rightarrow C_{n_1}^{n_2}(P)$ between the space $\mathcal{C}_{n_1}^{n_2}$ and the space $C_{n_1}^{n_2}(P)$ of momentum polynomials on phase space $P$ with degree between $n_1$ and $n_2$. If $\bm{C}\in \mathcal{C}_{n_1}^{n_2}$ then the corresponding momentum polynomial $I_{n_1}^{n_2}(\bm{C}) = c\in C_{n_1}^{n_2}(P)$ is
\begin{align*}
c(q,p) = \sum_{k=n_1}^{n_2} p_{i_1}\,\dots\,p_{i_k}C^{i_1\,\dots\,i_k}(q).
\end{align*}
Note the the sum over comoment degree is denoted explicitly, while the sums over tensor indices are implicit. The proof now proceeds by examining momentum polynomials in place of comoment vectors.

Let $C_{\mathfrak{s}}(P)$ denote the smooth phase space functions that vanish when $p=0$. Note that $C_1^\ell(P)\subset C_{\mathfrak{s}}(P)$ for each integer $\ell\geq 1$. The space $C_{\mathfrak{s}}(P)$ a Lie subalgebra of the Lie algebra of phase space functions under Poisson bracket. This follows immediately from the discussion of Lie subgroups of contactomorphisms in Section \ref{sec:lie}. Alternatively it can be verified directly as follows. Suppose $s_1,s_2\in C_{\mathfrak{s}}(P)$. The Poisson bracket of these functions is
\begin{align*}
\{s_1,s_2\} = \partial_{q^i}s_1\,\partial_{p_i}s_2 - \partial_{q^i}s_2\,\partial_{p_i}s_1.
\end{align*}
The sums on the right-hand-side each vanish when $p=0$ because $(\partial_{q^i}s_\ell)(q,0) = \partial_{q^i}(s_\ell(q,0)) = \partial_{q^i}(0) = 0$, $\ell = 1,2$. Therefore $\{s_1,s_2\}\in C_{\mathfrak{s}}(P)$, as claimed.

For each integer $\ell\geq 0$ let $\mathcal{I}_\ell\subset C_{\mathfrak{s}}(P)$ denote the phase space functions that vanish together with their first $\ell$ momentum derivatives when $p=0$. If $r\in \mathcal{I}_\ell$ and $s\in C_{\mathfrak{s}}(P)$ the Poisson bracket $\{r,s\}$ is
\begin{align*}
\{r,s\} = \partial_{q^i}r\,\partial_{p_i}s - \partial_{q^i}s\,\partial_{p_i}r.
\end{align*}
Differentiating the first sum with respect to momentum $\ell$ times or less produces a function that vanishes when $p=0$ because $\partial_{q_i}r$ vanishes together with its first $\ell$ momentum derivatives when $p=0$. Differentiating the second sum with respect to momentum fewer than $\ell$ times likewise produces a function that vanishes when $p=0$ because $\partial_{p_i}r$ vanishes when $p=0$ together with its first $\ell-1$ momentum derivatives. The only contribution to the $\ell^{\text{th}}$-order momentum derivative of the second sum that does not vanish at $p=0$ due to vanishing of $\partial_{p_i}r$ is a product of $\ell$ momentum derivatives of $\partial_{p_i}r$ with $\partial_{q^i}s$; $\ell^{\text{th}}$-order derivatives of $\partial_{p_i}r$ need not vanish when $p=0$. But this product vanishes when $p=0$ because $\partial_{q^i}s$ vanishes when $p=0$. Therefore $\{r,s\}\in\mathcal{I}_{\ell}$. In other words $\mathcal{I}_{\ell}$ is an ideal inside the Lie algebra $C_{\mathfrak{s}}(P)$ for each integer $\ell\geq 0$.

The quotient space $C_{\mathfrak{s}_{m_0}}(P) = C_{\mathfrak{s}}(P)/\mathcal{I}_{m_0} $ inherits the structure of a Lie algebra because $\mathcal{I}_{m_0}$ is an ideal. The Lie bracket is given by Poisson bracket mod $\mathcal{I}_{m_0}$. That is, if $s_1+\mathcal{I}_{m_0},s_2 + \mathcal{I}_{m_0}\in C_{\mathfrak{s}_{m_0}}(P)$ then the Lie bracket is
\begin{align*}
[s_1+\mathcal{I}_{m_0},s_2 + \mathcal{I}_{m_0}] = \{s_1,s_2\} + \mathcal{I}_{m_0}.
\end{align*}

There is a canonical isomorphism that sends each equivalence class in $C_{\mathfrak{s}_m}(P)$ to its unique representative that is a momentum space polynomial with degree between $1$ and $m_0$. Equivalently, $C_{\mathfrak{s}_{m_0}}(P)\approx C_1^{m_0}(P)$. Therefore $C_{1}^{m_0}(P)$ is a Lie algebra with Lie bracket given by
\begin{align*}
[c_1,c_2] = \{c_1,c_2\} \text{ mod }p^{m+1},\quad c_1,c_2\in C_1^{m_0}(P).
\end{align*}
Here $\text{ mod }p^{m+1}$ denotes setting all polynomial coefficients with degree greater than $m_0$ equal to $0$. In particular, if $c_1\in C_1^{m_0}(P)$ is homogeneous of degree $k_1\leq m_0$ and $c_2\in C_1^{m_0}(P)$ is homogeneous of degree $k_2\leq m_0$,
\begin{align*}
c_1(q,p) = p_{i_1}\dots p_{i_{k_1}}C_1^{i_1\dots i_{k_1}},\quad c_2(q,p) = p_{i_1}\dots p_{i_{k_2}}C_2^{i_1\dots i_{k_2}},
\end{align*}
the Lie bracket $[c_1,c_2]$ is given by
\begin{align}
[c_1,c_2] = \begin{cases} 0\text{ if }k_1 + k_2 - 1>m_0 \\ \bigg(k_2\,\partial_{q^I}C_1^{(i_1\dots i_{k_1}}C_2^{i_{k_1 + 1}\dots i_{k_1+k_2-1})I}-k_1\,\partial_{q^I}C_2^{(i_1\dots i_{k_2}}C_1^{i_{k_2 + 1}\dots i_{k_1+k_2-1})I}\bigg)p_{i_1}\dots p_{i_{k_1+k_2-1}} \text{ otherwise.}\end{cases}\label{monomial_bracket}
\end{align}
Here round braces around a grouping of indices denotes averaging over all permutations of that grouping.

Finally, since $\mathcal{C}_{1}^{m_0}$ is isomorphic to $C_1^{m_0}(P)$, there is a unique Lie algebra structure on comoment vectors $\mathcal{C}_1^{m_0}$ that makes $I_1^{m_0}$ an isomorphism of Lie algebras. In symbols, the bracket is $[\bm{C}_1,\bm{C}_2]_{m_0} = (I_1^{m_0})^{-1}[I_1^{m_0}(\bm{C}_1),I_1^{m_0}(\bm{C}_2)]$. Directly computing the right-hand-side of this formula with the aid of Eq.\,\eqref{monomial_bracket} reproduces Eq.\,\eqref{s_bracket}, as claimed.

\end{proof}

The Lie algebra structure on the comoment vectors $\mathcal{C}_1^{m_0}$ with degree between $1$ and $m_0$ implies that the dual space $(\mathcal{C}_1^{m_0})^*$ is a Poisson manifold with a Lie-Poisson bracket. But this dual space coincides with the space of moment vectors with degree between $1$ and $m_0$, $(\mathcal{C}_1^{m_0})^* = \mathcal{M}_1^{m_0}$. The corresponding Lie-Poisson bracket $\{\cdot,\cdot\}_1^{m_0}$ on $\mathcal{M}_1^{m_0}$ comes very close to providing the sought-after Poisson structure on $P_{m_0}$. However, the degree-$0$ moments, those that correspond to density, are conspicuously missing from $\mathcal{M}_1^{m_0}$. The scalar field $\psi\in C(Q)$ that specifies the distribution center is also not contained in $\mathcal{M}_1^{m_0}$. This shortcoming of the bracket on $\mathcal{M}_1^{m_0}$, as well as the analogous bracket on phase space moments with degree between $2$ and $m_0$, has been recognized previously\cite{Channell_1995,Holm_Tronci_2009}.

Further progress comes upon noticing that the space of fluid densities $n\in C^*(Q)$ is dual to the space of scalars $\psi\in C(Q)$. This means there is a natural Poisson bracket $\{\cdot,\cdot\}_0$ on the space $C(Q)\times C^*(Q)$ given on pairs of functionals $F,G$ by
\begin{align*}
\{F,G\}_0 = \left\langle\frac{\delta F}{\delta \psi},\frac{\delta G}{\delta n} \right\rangle-\left\langle\frac{\delta G}{\delta \psi},\frac{\delta F}{\delta n} \right\rangle.
\end{align*}
It follows that there is a natural Poisson bracket on each of the factors in $P_{m_0} = \bigg(C(Q)\times C^*(Q)\bigg)\times \mathcal{M}_1^{m_0}$. The simplest Poisson bracket on $P_{m_0}$ that incorporates the Lie Poisson bracket on $\mathcal{M}_1^{m_0}$ is therefore the direct sum
\begin{align}
\{F,G\}_{m_0} &= \{F,G\}_0 + \{F,G\}_{1}^{m_0}\nonumber\\
& = \left\langle\frac{\delta F}{\delta \psi},\frac{\delta G}{\delta n} \right\rangle-\left\langle\frac{\delta G}{\delta \psi},\frac{\delta F}{\delta n} \right\rangle  + \left\langle \bm{\mu},\left[\frac{\delta F}{\delta\bm{\mu}},\frac{\delta G}{\delta \bm{\mu}}\right] \right\rangle. \label{fundamental_bracket}
\end{align}
The Jacobi identity for this bracket is automatic since the direct sum of Poisson manifolds is again a Poisson manifold. Equation \eqref{fundamental_bracket} provides the Poisson bracket for the Hamiltonian fluid moment closures presented in this Article. Its relevance to moment dynamics will be established in Section \ref{sec:exact_reduction}, where it will become clear that this bracket naturally addresses the shortcoming of the Lie-Poisson bracket on $\mathcal{M}_1^{m_0}$ in isolation. Note that, for each $m_0$, this bracket accommodates all fluid moments with degree less than $m_0$, \emph{including the density moment}. Also note that there is no restriction on the dimension of configuration space $N$.

When $m_0=0$, the bracket $\{\cdot,\cdot\}_{m_0}$ is given by
\begin{align}
\{F,G\}_0 &= \left\langle\frac{\delta F}{\delta \psi},\frac{\delta G}{\delta n} \right\rangle-\left\langle\frac{\delta G}{\delta \psi},\frac{\delta F}{\delta n} \right\rangle,\label{bracket:m0}
\end{align}
which provides the appropriate Poisson bracket for Hamiltonian fluid closures involving $\psi$ and $n$ only. This is also the bracket for the Madelung transform of the Schrodinger equation. 

When $m_0=1$, with $\mu^1= \bm{P}_0$, the bracket $\{\cdot,\cdot\}_{m_0}$ is given by
\begin{align}
\{F,G\}_1 & = \left\langle\frac{\delta F}{\delta \psi},\frac{\delta G}{\delta n} \right\rangle-\left\langle\frac{\delta G}{\delta \psi},\frac{\delta F}{\delta n} \right\rangle-\int  \bm{P}_0\cdot\bigg(\frac{\delta F}{\delta \bm{P}_0}\cdot \nabla\frac{\delta G}{\delta \bm{P}_0} -\frac{\delta G}{\delta \bm{P}_0}\cdot \nabla\frac{\delta F}{\delta \bm{P}_0}  \bigg)\,dq,\label{bracket:m1}
\end{align}
which provides the appropriate Poisson bracket for Hamiltonian fluid closures involving $\psi$ and fluid moments with degree $k<2$.

When $m_0=2$, with $\mu^2 = \mathbb{S}_0$, the bracket $\{\cdot,\cdot\}_{m_0}$ is given by
\begin{align}
\{F,G\}_2 & = \left\langle\frac{\delta F}{\delta \psi},\frac{\delta G}{\delta n} \right\rangle-\left\langle\frac{\delta G}{\delta \psi},\frac{\delta F}{\delta n} \right\rangle-\int  \bm{P}_0\cdot\bigg(\frac{\delta F}{\delta \bm{P}_0}\cdot \nabla\frac{\delta G}{\delta \bm{P}_0} -\frac{\delta G}{\delta \bm{P}_0}\cdot \nabla\frac{\delta F}{\delta \bm{P}_0}  \bigg)\,dq\nonumber\\
& -\int \mathbb{S}_0:\bigg( \frac{\delta F}{\delta \bm{P}_0}\cdot\nabla\frac{\delta G}{\delta \mathbb{S}_0} -\frac{\delta G}{\delta \bm{P}_0}\cdot\nabla\frac{\delta F}{\delta \mathbb{S}_0} \bigg)\, dq - \int \mathbb{S}_0:\bigg(\frac{\delta F}{\delta \mathbb{S}_0}\cdot \nabla\frac{\delta G}{\delta \bm{P}_0} -\frac{\delta G}{\delta \mathbb{S}_0}\cdot \nabla\frac{\delta F}{\delta \bm{P}_0}  \bigg)_s\,dq,\label{m2_bracket}
\end{align}
where $A_s = A + A^T$ for a dyad $A$.


\section{Hamiltonian dynamics of localized momentum distributions\label{sec:exact_reduction}}
Extending the observations of Channell\cite{Channell_1995}, Scovel-Weinstein\cite{Scovel_Weinstein_1994} provide a technical device that can be used to relate the product Poisson bracket on $P_{m_0} = C(Q)\times C^*(Q)\times\mathcal{M}_1^{m_0}$ described in Section \ref{sec:PB} to the Lie-Poisson bracket on $C^*(P)$ that underlies the Vlasov-Poisson system. The following Theorem represents an application of that device, which was originally developed for treating phase space moments, to fluid moments.

\begin{theorem}\label{thm:sw_map_for_moments}
Let $\{\cdot,\cdot\}_0$ denote the canonical Poisson bracket on the space $C(Q)\times C^*(Q)$. Let $\{\cdot,\cdot\}_1^{m_0}$ denote the Lie-Poisson bracket on $\mathcal{M}_{1}^{m_0}$ provided by Theorem \ref{moment_LA_thm}. Equip the space $\bigg(C(Q)\times C^*(Q)\bigg)\times \mathcal{M}_{1}^{m_0}$ with the product Poisson bracket $\{\cdot,\cdot\}_{m} = \{\cdot,\cdot\}_0 + \{\cdot,\cdot\}_1^{m_0}$.
The mapping $\Gamma_{m_0}:C(Q)\times C^*(Q)\times \mathcal{M}_{1}^{m_0}\rightarrow C^*(P)$ given by
\begin{align}
\Gamma_{m_0}(\psi,n,\bm{\mu})(q,p) &=n(q)\,\delta(p+d\psi(q)) - \mu_{i_1}(q)\,\partial_{p_{i_1}}\delta(p+d\psi(q))\nonumber\\
&+ \frac{1}{2}\,\mu_{i_1\,i_2}(q)\,\partial^2_{p_{i_1\,i_2}}\delta(p+d\psi(q)) + \dots + \frac{(-1)^{m_0}}{m!}\mu_{i_1\dots i_{m_0}}(q)\,\partial^{m_0}_{p_{i_1}\dots p_{i_{m_0}}}\delta(p+d\psi(q)).   \label{sw_poisson_map}
\end{align}
is a Poisson map.
\end{theorem}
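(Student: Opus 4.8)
The plan is to recognize $\Gamma_{m_0}$ as a concrete incarnation of the reduced Scovel-Weinstein map $\Gamma_{\mathcal{I}}$ from Section \ref{sec:scovel}, applied with $G$ the contactomorphism group, $B$ the fiber translations, $S$ the isostatic contactomorphisms, and $\mathcal{I} = \mathcal{I}_{m_0}$ the ideal inside $\mathfrak{s} = C_{\mathfrak{s}}(P)$ consisting of phase space functions that vanish together with their first $m_0$ momentum derivatives on $Z$. Theorem \ref{G_decomp_thm} supplies the decomposition $G = BS$ required to invoke Scovel-Weinstein theory, and the proof of Theorem \ref{moment_LA_thm} already identifies the quotient $\mathfrak{s}_{\mathcal{I}} = C_{\mathfrak{s}}(P)/\mathcal{I}_{m_0} \cong C_1^{m_0}(P) \cong \mathcal{C}_1^{m_0}$ as a Lie algebra. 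Granting the Poisson-map property of $\Gamma_{\mathcal{I}}$, which is the content of the Scovel-Weinstein construction recalled in Section \ref{sec:scovel}, the task reduces to (i) matching source spaces together with their product Poisson structures and (ii) evaluating $\Gamma_{\mathcal{I}}$ explicitly to recover Eq.\,\eqref{sw_poisson_map}.

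For step (i) I would establish three identifications. Using $I_B$, the group $B$ is identified with $C(Q) \ni \psi$. The annihilator $\mathfrak{b}^* \cong C_{\mathfrak{b}}^*(P)$ consists of distributions supported on $Z$ carrying no momentum derivatives, so $\mathfrak{b}^* \cong C^*(Q)$ via $n \mapsto n(q)\,\delta(p)$. Dualizing the Lie-algebra isomorphism of Theorem \ref{moment_LA_thm} gives $\mathfrak{s}^*_{\mathcal{I}} \cong (\mathcal{C}_1^{m_0})^* = \mathcal{M}_1^{m_0} \ni \bm{\mu}$. Together these yield $(B\times\mathfrak{b}^*)\times\mathfrak{s}^*_{\mathcal{I}} \cong C(Q)\times C^*(Q)\times\mathcal{M}_1^{m_0} = P_{m_0}$. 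I would then check the two factors of the product bracket separately: because $B$ is Abelian, the left-trivialized canonical bracket on $T^*B$ carries no Lie-Poisson correction and collapses to the canonical $(\psi,n)$ bracket $\{\cdot,\cdot\}_0$; and by construction the Lie-Poisson bracket on $\mathfrak{s}^*_{\mathcal{I}}$ is precisely the bracket $\{\cdot,\cdot\}_1^{m_0}$ of Theorem \ref{moment_LA_thm}. Hence the product bracket is $\{\cdot,\cdot\}_{m_0} = \{\cdot,\cdot\}_0 + \{\cdot,\cdot\}_1^{m_0}$, exactly the hypothesis of the theorem.

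Step (ii) is the computational heart. Writing $\Gamma_{\mathcal{I}}(\psi,n,\bm{\mu}) = \text{Ad}^*_b\big(b^* + \Pi^*_{\mathcal{I}}(\bm{\mu})\big)$, I would first compute the embedded moment distribution $\Pi_{\mathcal{I}}^*(\bm{\mu})$. Since $\Pi_{\mathcal{I}}$ sends $s \in C_{\mathfrak{s}}(P)$ to its degree-$1$-through-$m_0$ Taylor polynomial in $p$ about $Z$, whose degree-$k$ comoment components are $\tfrac{1}{k!}\partial^k_{p_{i_1}\dots p_{i_k}}s(q,0)$, the defining relation $\langle \Pi^*_{\mathcal{I}}(\bm{\mu}), s\rangle = \langle \bm{\mu}, \Pi_{\mathcal{I}}(s)\rangle$ together with the integration-by-parts identity $\int \partial^k_{p}\delta(p)\,s\,dp = (-1)^k \partial^k_{p}s(q,0)$ forces $\Pi^*_{\mathcal{I}}(\bm{\mu}) = \sum_{k=1}^{m_0} \tfrac{(-1)^k}{k!}\mu_{i_1\dots i_k}(q)\,\partial^k_{p_{i_1}\dots p_{i_k}}\delta(p)$. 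Adding $b^* = n(q)\,\delta(p)$ reproduces the right-hand side of Eq.\,\eqref{sw_poisson_map} with all delta-functions centered at $p=0$. Finally I would apply the coadjoint action: for the fiber translation $b = (\tau_\psi,\pi^*\psi)$ one has $\text{Ad}^*_b f = (\tau_\psi)_* f$, and since $\tau_\psi^{-1}(q,p) = (q,\,p + d\psi(q))$ is a volume-preserving shear, the pushforward replaces every $\delta(p)$ and each of its momentum derivatives by $\delta(p + d\psi(q))$. This recenters the distribution exactly as in Eq.\,\eqref{sw_poisson_map}, establishing $\Gamma_{m_0} = \Gamma_{\mathcal{I}}$ and hence the Poisson-map property.

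I expect the main obstacle to be the bookkeeping in step (ii): correctly pinning down the dual $\Pi^*_{\mathcal{I}}$ of the Taylor-truncation map so that the signs and factorials $(-1)^k/k!$ emerge, and verifying that the coadjoint pushforward acts as a clean $p$-translation on delta-derivatives. The bracket-matching in step (i), especially the vanishing of the Lie-Poisson correction to the $T^*B$ bracket owing to commutativity of $B$, is conceptually the linchpin but computationally light.
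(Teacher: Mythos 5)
Your proposal is correct and follows essentially the same route as the paper's proof: both invoke the Scovel--Weinstein map for the decomposition $G = BS$ of the contactomorphism group into fiber translations and isostatic contactomorphisms, reduce by the ideal $\mathcal{I}_{m_0}$ so that the dual $\Pi_{m_0}^*:\mathcal{M}_1^{m_0}\rightarrow C^*_{\mathfrak{s}}(P)$ of the Taylor-truncation quotient map becomes a Poisson map, and then evaluate the composition $\Gamma\circ[\mathrm{id}\times\Pi_{m_0}^*]$ --- with the coadjoint action of the fiber translation $b=(\tau_\psi,\pi^*\psi)$ shifting every $\delta(p)$ and its momentum derivatives to $\delta(p+d\psi(q))$ --- to recover Eq.\,\eqref{sw_poisson_map}. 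Your step (i), checking that the left-trivialized $T^*B$ bracket collapses to the canonical $(\psi,n)$ bracket because $B$ is Abelian, is spelled out more explicitly than in the paper, but it is the same argument.
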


\begin{remark}
The right-hand-side of Eq.\,\eqref{sw_poisson_map} recovers the ansatz \eqref{f_ansatz_basic} for the novel exact reduction of Vlasov-Poisson dynamics mentioned in the introduction. The fact that the map \eqref{sw_poisson_map} is Poisson justifies the claim from Section \ref{sec:PB} that the bracket $\{\cdot,\cdot\}_{m_0}$ repairs the shortcoming of the Hamiltonian fluid moment closures due to Holm and Tronci\cite{Holm_Tronci_2009} that do not include the density moment. It is interesting to note that Chong\cite{Chong_2022} previously constructed a Poisson map that sends phase space distributions into fluid-like quantities; that Poisson map appears to be unrelated to this one.
\end{remark}

\begin{proof}
Let $G$ be a Lie group with Lie subgroups $B,S\subset G$ such that the multiplication map $B\times S\rightarrow G:(b,s)\mapsto bs$ is a diffeomorphism. As summarized in Section \ref{sec:scovel}, Scovel-Weinstein proved that
\begin{align*}
\Gamma(b,b_0^*,s_0^*)= \text{Ad}^*_b(b_0^* + s_0^*),
\end{align*}
defines a Poisson map from the product Poisson manifold $(B\times \mathfrak{b}^*)\times (\mathfrak{s}^*)$ to the Lie-Poisson space $\mathfrak{g}^*$. Here $\text{Ad}^*_g$ denotes the (left) coadjoint action of $G$ on $\mathfrak{g}^*$. Let $G$ be the contactomorphism group, $B\subset G$ the fiber translations, and $S\subset G$ the isostatic contactomorphisms, as defined in Section \ref{sec:lie}. Identify the fiber translations $B$ with the Abelian group of scalar functions $\psi\in C(Q)$. Identify the Lie algebras $\mathfrak{b},\mathfrak{s},\mathfrak{g}$, with configuration space scalars $\sigma\in C(Q)\approx \mathfrak{b}$, phase space functions $s\in C_{\mathfrak{s}}(P)\approx \mathfrak{s}$ that vanish when $p=0$, and general phase space functions $h\in C(P)\approx \mathfrak{g}$. The dual spaces are then (distributional) configuration space functions $n\in C^*(Q)\approx \mathfrak{b}^*$, (distributional) phase space functions $\tilde{f}_0\in C^*_{\mathfrak{s}}(P)$ with zero fluid density, and general (distributional) phase space functions $f\in C^*(P)\approx \mathfrak{g}^*$. The corresponding duality pairings are the usual $L^2$ integration pairings.  Under these identifications, the Poisson map $\Gamma: C(Q)\times C^*(Q)\times C^*_{\mathfrak{s}}(P)\rightarrow C^*(P)$ becomes
\begin{align*}
\Gamma(\psi,n,\tilde{f}_0)(q,p) &= \tau_{\psi*}(\pi^*n\,\delta(p) + \tilde{f}_0)(q,p)= n(q)\,\delta(p + d\psi(q)) + \tilde{f}_0(q,p+d\psi(q)).
\end{align*}
Note that $C(Q)\times C^*(Q)\times C^*_{\mathfrak{s}}(P)$ is endowed with the product Poisson structure corresponding to the canonical poisson bracket on $C(Q)\times C^*(Q)$ and the Lie Poisson bracket on $C^*_{\mathfrak{s}}(P)$.

As shown in the proof of Theorem \ref{moment_LA_thm}, the phase space functions $r\in \mathcal{I}_{m_0}$ that vanish along with their first $m_0$ momentum derivatives when $p=0$ form an ideal in the Lie algebra $C_{\mathfrak{s}}(P)$. Therefore the quotient space $C_{\mathfrak{s}}(P)/\mathcal{I}_{m_0} = C_{\mathfrak{s}_{m_0}}(P)$ has a natural Lie algebra structure and the quotient map $\Pi_{m_0}:C_{\mathfrak{s}}(P) \rightarrow  C_{\mathfrak{s}_{m_0}}(P)$ is a Lie algebra homomorphism. Also from the proof of Theorem \ref{moment_LA_thm}, the quotient space $C_{\mathfrak{s}_{m_0}}(P)$ is naturally isomorphic as a Lie algebra to the space of comoment vectors $\mathcal{C}_1^{m_0}$ with degree between $1$ and $m_0$. (The Lie algebra structure on $\mathcal{C}_1^{m_0}$ is the one provided by Theorem \ref{moment_LA_thm}.) Under this identification, $\Pi_{m_0}:C_{\mathfrak{s}}(P)\rightarrow \mathcal{C}_1^{m_0}$ is the map that assigns to each $s\in C_{\mathfrak{s}}(P)$ its weighted momentum Taylor series coefficients with degrees between $1$ and $m_0$:
\begin{align*}
\Pi_{m_0}(s) = \bigg(\partial_{p_{i_1}}s(q,0)\,e_{i_1},\frac{1}{2}\partial^2_{p_{i_1}\,p_{i_2}}s(q,0)\,e_{i_1}\otimes e_{i_2},\dots, \frac{1}{m!}\partial^{m_0}_{p_{i_1}\dots p_{i_{m_0}}}s(q,0)\,e_{i_1}\otimes\dots \otimes e_{i_{m_0}}\bigg).
\end{align*}
Since $\Pi_{m_0}$ is a Lie algebra homomorphism, the adjoint $\Pi_{m_0}^*:\mathcal{M}_1^{m_0}\rightarrow C^*_{\mathfrak{s}}(P)$ is a Poisson mapping between Lie-Poisson spaces. A simple direct calculation leads to the explicit formula
\begin{align*}
\Pi_{m_0}^*(\bm{\mu})(q,p) = - \mu_{i_1}(q)\,\partial_{p_{i_1}}\delta(p) + \frac{1}{2}\,\mu_{i_1\,i_2}(q)\,\partial^2_{p_{i_1\,i_2}}\delta(p) + \dots + \frac{(-1)^{m_0}}{m_0!}\mu_{i_1\dots i_{m_0}}(q)\,\partial^{m_0}_{p_{i_1}\dots p_{i_{m_0}}}\delta(p).
\end{align*}

Since $\Pi_{m_0}^*$ and $\Gamma$ are Poisson maps, the composition
\begin{align*}
\Gamma_{m_0}:C(Q)\times C^*(Q)\times \mathcal{M}_1^{m_0}\xrightarrow{\text{id}_{C(Q)\times C^*(Q)}\times \Pi_{m_0}^*} C(Q)\times C^*(Q)\times C^*_{\mathfrak{s}}(P)\xrightarrow{\Gamma} C^*(P),
\end{align*}
is a Poisson map as well. Evaluating the composition using the above formulas shows that $\Gamma_{m_0}$ is the map given in the Theorem statement.
\end{proof}

Theorem \ref{thm:sw_map_for_moments} is quite powerful due to a construction\cite{Guillemin_1980} known as ``collectivisation." Given a Poisson mapping $\Gamma:\Lambda\rightarrow Z$ between Poisson manifolds $\Lambda,Z$, every Hamiltonian function $h$ on $Z$ gives rise to a corresponding Hamiltonian function $H =\Gamma^*h$ on $\Lambda$ known as the ``collective" Hamiltonian. Remarkably, if $\lambda(t)$ is any integral curve for the collective Hamiltonian vector field $X_H$ then the image curve $z(t)=\Gamma(\lambda(t))$ is an integral curve for the Hamiltonian vector field $X_h$. This follows from the simple chain of identities for arbitrary $q\in C(Z)$:
\begin{align*}
\frac{d}{dt}q(z(t)) = \frac{d}{dt}q(\Gamma(\lambda(t))) = \frac{d}{dt}(\Gamma^*q)(\lambda(t)) = \{\Gamma^*q,\Gamma^*h\}_{\Lambda}(\lambda(t)) = (\Gamma^*\{q,h\}_Z)(\lambda(t)) = \{q,h\}_Z(z(t)).
\end{align*}

In the context of fluid moments, the collectivisation construction and Theorem \ref{thm:sw_map_for_moments} have the following implication. Let 
\begin{align}
\mathcal{H}(f) = \int\frac{1}{2m}|p|^2f\,dq\,dp + \int\frac{1}{2}e\,\hat{V}(n)\,n\,dq 
\end{align}
denote the Hamiltonian for the Vlasov-Poisson system. Here $\hat{V}:C^*(Q)\rightarrow C(Q)$ is the self-adjoint linear operator that sends a density $n$ to the corresponding solution $V$ of Poisson's equation, $-\Delta V = 4\pi e\,\widetilde{n} $, where $\widetilde{n}$ denotes the number density less its average. Pulling back $\mathcal{H}$ along the Poisson map $\Gamma_{m_0}$ produces a collective Hamiltonian $\mathcal{H}_{m_0}$ on $P_{m_0} = C(Q)\times C^*(Q)\times\mathcal{M}_1^{m_0}$. If $(\psi(t),n(t),\bm{\mu}(t))$ is any solution of the Hamiltonian system on $P_{m_0}$ defined by $\mathcal{H}_{m_0}$ then $f(t) = \Gamma_{m_0}(\psi(t),n(t),\bm{\mu}(t))$ is an exact solution of the Vlasov-Poisson system. In other words, Theorem \ref{thm:sw_map_for_moments} produces exact solutions of Vlasov-Poisson with any finite number $m_0$ of fluid moments!

When $m_0=0$, the space $P_{m_0}$ is just $P_0 = C(Q)\times C^*(Q)\ni (\psi,n)$, with the standard canonical Poisson bracket \eqref{bracket:m0}. Pulling back the physical Hamiltonian $\mathcal{H}(f)$ along $\Gamma_0$ leads to the collective Hamiltonian 
\begin{align*}
\mathcal{H}_{0}(\psi,n) = \frac{1}{2m}\int |\nabla\psi|^2\,n\,dq + \frac{1}{2}\int e\,\hat{V}(n)\,n\,dq.
\end{align*}
The functional derivatives of the collective Hamiltonian are given by
\begin{align*}
\frac{\delta \mathcal{H}_{0}}{\delta \psi} = -m^{-1}\nabla\cdot(n\,\nabla\psi),\quad \frac{\delta \mathcal{H}_{0}}{\delta n} = \frac{1}{2m}|\nabla\psi|^2 + e \hat{V}(n).
\end{align*}
The corresponding equations of motion are therefore
\begin{align*}
\partial_tn & = m^{-1}\nabla\cdot (n\,\nabla\psi)\\
\partial_t\psi &= \frac{1}{2m}|\nabla\psi|^2  +e\hat{V}(n).
\end{align*}
Upon introducing the velocity variable $\bm{v} = -m^{-1}\nabla\psi$, these equations may be re-written as
\begin{align*}
\partial_t(m\,n\,\bm{v}) + \nabla\cdot (m\,n\,\bm{v}\,\bm{v}) = -e\,n\,\nabla \hat{V}(n),\quad \partial_tn + \nabla\cdot(n\,\bm{v})=0,\quad m\,\bm{v} = -\nabla\psi,
\end{align*}
which reveals equivalence with Euler's equations for a cold, irrotational fluid. It is a well-known fact\cite{Uhlemann_2018} that such irrotational flows correspond to exact solutions of the Vlasov-Poisson system. 

When $m_0=1$, with $\mu^1 = \bm{P}_0$, the space $P_{m_0}$ becomes $P_1 = C(Q)\times C^*(Q)\times\mathcal{M}_1^1$ equipped with the Poisson bracket in Eq.\,\eqref{bracket:m1}. Pulling back the physical Hamiltonian along $\Gamma_1$ leads to the collective Hamiltonian
\begin{align*}
\mathcal{H}_1(\psi,n,\bm{P}_0) =\int \frac{1}{2m n}|\bm{P}_0 - n\nabla\psi|^2\,dq -\int \frac{1}{2mn}|\bm{P}_0|^2\,dq + \frac{1}{2}\int e\,\hat{V}(n)\,n\,dq. 
\end{align*}
The functional derivatives of the collective Hamiltonian are
\begin{align*}
\frac{\delta\mathcal{H}_1}{\delta\psi} = m^{-1}\nabla\cdot(\bm{P}_0 - n\nabla\psi),\quad \frac{\delta \mathcal{H}_1}{\delta n} =\frac{1}{2m}|\nabla\psi|^2 + e\hat{V}(n),\quad \frac{\delta\mathcal{H}_1}{\delta \bm{P}_0} = -m^{-1}\nabla\psi.
\end{align*}
The corresponding equations of motion are therefore
\begin{align*}
\partial_tn &= -m^{-1}\nabla\cdot(\bm{P}_0 - n\,\nabla\psi)\\
\partial_t\psi & = \frac{1}{2m}|\nabla\psi|^2 + e\hat{V}(n)\\
\partial_t\bm{P}_0 & = m^{-1}\bm{P}_0\cdot\nabla(\nabla\psi) + m^{-1}(\nabla\psi)\cdot\nabla \bm{P}_0 + m^{-1}\nabla\cdot(\nabla\psi)\,\bm{P}_0.
\end{align*}
Distributions in the image of $\Gamma_1$ have momentum density $\bm{P} = \int p\,f\,dp = \bm{P}_0 - n\,\nabla\psi$ and momentum flux 
\begin{align*}
\mathbb{S} = m^{-1}\int p\,p\,f\,dp= -m^{-1}\,\nabla\psi\,\bm{P}_0 - m^{-1}\,\bm{P}_0\nabla\psi + m^{-1}\,n\,\nabla\psi\nabla\psi= -m^{-1}\,\nabla\psi\,\bm{P} - m^{-1}\,\bm{P}\nabla\psi - m^{-1}\,n\,\nabla\psi\nabla\psi.
\end{align*}
In terms of these variables, the equations of motion are instead
\begin{align*}
\partial_t(mn) +\nabla\cdot\bm{P}=0,\quad \partial_t\bm{P} + \nabla\cdot \mathbb{S} = -e\,n\,\nabla \hat{V}(n),\quad  \partial_t\psi & = \frac{1}{2m}|\nabla\psi|^2 + e\hat{V}(n).
\end{align*}
These equations reveal that the exact reduction with $m_0=1$ again recovers familiar hydrodynamic equations, but now with a non-trivial closure relation for the momentum flux tensor $\mathbb{S}$. This Hamiltonian moment closure appears to be new, but it would be interesting to compare it with previously-derived Hamiltonian moment closures that include moments with degree $k<2$.

When $m_0=2$, with $\mu^2 = \mathbb{S}_0$, the space $P_{m_0} = C(Q)\times C^*(Q)\times\mathcal{M}_1^2$ is equipped with the Poisson bracket \eqref{m2_bracket}. The collective Hamiltonian is
\begin{align*}
\mathcal{H}_2(\psi,n,\bm{P}_0,\mathbb{S}_0) & = \frac{1}{2m}\int \text{Tr}(\mathbb{S}_0)\,dq -\frac{1}{m}\int \bm{P}_0\cdot \nabla\psi\,dq + \frac{1}{2m}\int |\nabla\psi|^2\,n + \frac{1}{2}\int e\,\hat{V}(n)\,n\,dq.
\end{align*}
The corresponding functional derivatives are
\begin{align*}
\frac{\delta\mathcal{H}_2}{\delta \psi} = \frac{1}{m}\nabla\cdot(\bm{P}_0 - n\nabla\psi),\quad \frac{\delta\mathcal{H}_2}{\delta n}  = \frac{1}{2m}|\nabla\psi|^2 + e\,\hat{V}(n),\quad \frac{\delta \mathcal{H}_2}{\delta \bm{P}_0} =-\frac{1}{m}\nabla\psi,\quad \frac{\delta\mathcal{H}_2}{\delta \mathbb{S}_0} = \frac{1}{2m}\eta = \frac{1}{2m}\delta^{i_1\,i_2}\,e_{i_1}\otimes e_{i_2}.
\end{align*}
The equations of motion are therefore given by
\begin{align*}
\partial_t\psi & = \frac{1}{2}m|\bm{v}|^2 + e\,\hat{V},\quad \bm{v} = -m^{-1}\nabla\psi\\
\partial_tn & = -m^{-1}\nabla\cdot(\bm{P}_0 + m\,n\bm{v})\\
\partial_t\bm{P}_0 & = -\bm{v}\cdot\nabla\bm{P}_0 - \nabla\bm{v}\cdot\bm{P}_0 - (\nabla\cdot\bm{v})\bm{P}_0 - m^{-1}\nabla\cdot\mathbb{S}_0\\
\partial_t\bm{S}_0 & =-\bm{v}\cdot\nabla\mathbb{S}_0 - \nabla\bm{v}\cdot \mathbb{S}_0 - \mathbb{S}_0\cdot\nabla\bm{v} - (\nabla\cdot\bm{v})\mathbb{S}_0.
\end{align*}
Note that these equations imply $\partial_t\bm{v} + \bm{v}\cdot\nabla\bm{v} = -\frac{e}{m}\nabla\hat{V}$. Upon introducing the conventional (non-centered) moments,
\begin{align*}
\bm{P} & = \bm{P}_0 + m\,n\,\bm{v}\\
\mathbb{S} & = \mathbb{S}_0 + m\,(\bm{v}\,\bm{P}_0 + \bm{P}_0\,\bm{v}) + m^2\,n\,\bm{v}\,\bm{v},
\end{align*}
and performing a tedious explicit calculation, the evolution equations for $(\psi,n,\bm{P}_0,\mathbb{S}_0)$ may be re-written as
\begin{align}
\partial_t\psi & = \frac{1}{2}m|\bm{v}|^2 + e\,\hat{V},\quad \bm{v} = -m^{-1}\nabla\psi\label{2o_one}\\
\partial_tn & = - m^{-1}\nabla\cdot\bm{P}\\
\partial_t\bm{P} & = - m^{-1}\nabla\cdot\mathbb{S} -e\,n\,\nabla\hat{V}\\
\partial_t\mathbb{S} & = -m^{-1}\nabla\cdot\mathcal{T} - e\,(\nabla\hat{V}\bm{P} + \bm{P}\nabla\hat{V}).\label{2o_four}
\end{align}
These evolution equations for $(n,\bm{P},\mathbb{S})$ exactly recover the first three equations in the moment hierarchy \eqref{momentum_flux_eqn} with a novel closure for the third-degree fluid moment $\mathcal{T} = \int p\,p\,p\,f\,dp$ given by
\begin{align}
\mathcal{T} & = 3m\,\bm{v} \odot \mathbb{S}_0 + 3m^2\,\bm{v}\bm{v}\odot \bm{P}_0 + m^3\,n\,\bm{v}\bm{v}\bm{v}\nonumber\\
& = 3m\,\bm{v}\odot \mathbb{S} - 3m^2\,\bm{v}\bm{v}\odot\bm{P} + m^3\,n\,\bm{v}\bm{v}\bm{v}.
\end{align}
Here $\odot$ denotes the symmetrized tensor product, e.g. $3(\bm{v} \odot \mathbb{S}_0)_{i_1\,i_2\,i_3} = v_{i_1}\,S_{i_2\,i_3} + v_{i_2}\,S_{i_1\,i_3} + v_{i_3}\,S_{i_1\,i_2}$. 

The system \eqref{2o_one}-\eqref{2o_four} admits stationary homogeneous equilibria of the form $\psi=0$, $n=n_0=\text{const.}$, $\bm{P}=0$, $\mathbb{S} = \mathbb{S}_0=\text{const.}$ When $N=1$ and $\mathbb{S}_0\geq 0$, these equilibria are stable if and only if $\mathbb{S}_0 = 0$. When $\mathbb{S}_0 > 0$ the linearized spectrum for Fourier mode $k$ contains a pair of imaginary eigenvalues corresponding to oscillations at the Bohm-Gross frequency,
\begin{align*}
\omega^2 = \omega_p^2\bigg(\frac{1}{2} + \frac{1}{2}\sqrt{1 + 4 v_{\text{th}}^2\,k^2/\omega_p^2}\bigg),\quad \omega_p^2 = \frac{4\pi\,e^2n_0}{m},\quad v_{\text{th}}^2 = \frac{\mathbb{S}_0}{m^2\,n_0},
\end{align*}
and a pair of real eigenvalues corresponding to exponential growth and decay with rates
\begin{align*}
\gamma^2 = -\omega_p^2\bigg(\frac{1}{2} - \frac{1}{2}\sqrt{1 + 4 v_{\text{th}}^2\,k^2/\omega_p^2}\bigg)\bigg).
\end{align*}
(Curiously, equilibria with $\mathbb{S}_0 < 0$ are linearly stable!) So the exact closure with $m_0=2$ already begins to reflect the well-known fact that only certain homogeneous equilibria for the Vlasov-Poisson system, e.g. monotone-decreasing distributions\cite{MP_1989}, are linearly stable. Note that the condition $\mathbb{S}_0 > 0$ does not imply a stable distribution; there are many linearly-unstable homogeneous Vlasov-Poisson equilibria with positive second moment. Also note that the growth rate $\gamma=O(\sqrt{|k|})$ for large $k$, implying ill-posedness of the initial value problem. Physical solutions must therefore be found within the center manifolds\cite{Fenichel_1979,MacKay_2004,Burby_Klotz_2020} attached to such equilibria, much as physical solutions of the Abraham-Lorentz-Dirac equation for individual electrons are contained in a center manifold\cite{Spohn_2000}. Further work is needed to determine linear stability properties of homogeneous equilibria with $N>1$, $m_0>2$. It is unclear whether the data-driven Hamiltonians described in Section \ref{sec:general_scheme} will also require reduction to a center manifold to describe physical dynamics.



\section{A closure scheme for general distributions\label{sec:general_scheme}}





The theory developed so far leads to a framework for constructing data-driven fluid closures of the Vlasov-Poisson system with Hamiltonian structure. This framework assumes availability of a legacy simulation code for the Vlasov-Poisson system; either particle-in-cell (PIC) codes or continuum codes will suffice. The framework presented here should be contrasted with the structure-preserving reduced-order modeling techniques developed by Hesthaven, Pagliantini, and Ripamonti\cite{Hesthaven_rank_2020,Hesthaven_poisson_2021,Hesthaven_HAM_2021}. Those reduced-order models assume the reduced, data-driven dependent variables admit constant symplectic or Poisson structures, or else rely on transformations to such variables. In contrast, the method presented here assumes the reduced dependent variables admit the non-constant Poisson structure provided by Scovel-Weinstein theory. In addition, the reduced dependent variables used here are always simply related to physical quantities of interest, namely the fluid moments.


Developing a closure that includes fluid moments up to degree-$m_0$ requires a parametric energy function
\begin{align*}
\mathcal{H}_W : C(Q)\times \mathcal{M}_0^{m_0} \rightarrow\mathbb{R}:(\psi,\bm{\mu})\mapsto \mathcal{H}_W(\psi,\bm{\mu}),
\end{align*}
and a \emph{model} 
\begin{align*}
\mathcal{F}_0:\mathcal{M}_0^{m_0}\rightarrow C^*(P)
\end{align*}
that assigns a centered distribution $f_0 = \mathcal{F}_0(\bm{\mu})$ to each centered moment vector $\bm{\mu}\in \mathcal{M}_0^{m_0}$. This terminology has been borrowed from Scovel-Weinstein\cite{Scovel_Weinstein_1994}, who introduced the notion of a model for phase space moments. For consistency, the model distribution function $\mathcal{F}_0(\bm{\mu})$ must have moments compatible with $\bm{\mu}$:
\begin{align*}
\mu^k_{i_1\,\dots\,i_k} = \int p_{i_1}\,\dots \,p_{i_k}\mathcal{F}_0(\bm{\mu})\,dq,\quad k=0,1,\dots,m_0.
\end{align*}
Given a field $\psi\in C(Q)$, the ordinary distribtion function $f$ associated with $\mathcal{F}_0(\bm{\mu})$ is $f = \tau_{\psi *}\mathcal{F}_0(\bm{\mu})$. To fully specify the closure, the variable $W$ that parameterizes the energy must be adjusted to an optimal value $W^*$. The closure itself will then comprise a Hamiltonian system on $C(Q)\times \mathcal{M}_0^{m_0}$ with Hamiltonian $\mathcal{H}_{W^*}$ and Poisson bracket given by Eq.\,\eqref{fundamental_bracket}.

Finding the optimal parameter value $W^*$ proceeds as follows. Start by selecting a finite collection $\mathcal{D}$ of samples $(\psi,\bm{\mu})\in \mathcal{D}$ from the space $C(Q)\times\mathcal{M}_0^{m_0}$, together with a timestep parameter $\Delta t\in\mathbb{R}$. For each sample $(\psi,\bm{\mu})\in \mathcal{D}$ perform two computations. (1) Approximately evolve $(\psi,\bm{\mu})$ along the Hamiltonian vector field $X_{\mathcal{H}_W} = (\dot{\psi}_{\mathcal{H}_W},\dot{\bm{\mu}}_{\mathcal{H}_W})$ for $\Delta t$ seconds using the forward-Euler formula
\begin{align*}
\psi(\Delta t) = \psi + \Delta t\,\dot{\psi}_{\mathcal{H}_W}(\psi,\bm{\mu}),\quad \bm{\mu}(\Delta t) = \bm{\mu} +\Delta t\, \dot{\bm{\mu}}_{\mathcal{H}_W}(\psi,\bm{\mu}).
\end{align*}
Then compute the corresponding candidate evolved moment vector $\overline{\bm{M}}(\Delta t) = (\overline{M}^0(\Delta t),\dots,\overline{M}^{m_0}(\Delta t))$, where the degree-$k$ candidate moment is given by
\begin{align}
\overline{M}(\Delta t)_{i_1\,\dots\,i_k} = \int p_{i_1}\dots p_{i_k}\tau_{\psi(\Delta t) *}\mathcal{F}_0(\bm{\mu}(\Delta t))\,dp.\label{candidate_moments}
\end{align}
Note that $\overline{M}^k(\Delta t)$ depends explicitly on $W$.
(2) Reconstruct the distribution function $f = \tau_{\psi *}\mathcal{F}_0(\bm{\mu})$. Use the legacy simulation code to find the time evolution $f(\Delta t)$ of $f$. Then compute the corresponding ``ground-truth" evolved moment vector $\bm{M}(\Delta t)$, where the degree-$k$ ground-truth evolved moment is given by
\begin{align}
{M}_{i_1\,\dots\,i_k} = \int p_{i_1}\dots p_{i_k}f(\Delta t)\,dp.\label{gt_moments}
\end{align}
With each sample $(\psi,\bm{\mu})\in\mathcal{D}$ now processed, construct the mean-squared loss
\begin{align*}
S(W) = \frac{1}{|\mathcal{D}|}\sum_{(\psi,\bm{\mu})\in\mathcal{D}} \| \bm{M}(\Delta t) - \overline{\bm{M}}(\Delta t) \|^2,
\end{align*}
and find $W^*$ by solving the optimization problem
\begin{align*}
W^* = \text{argmin}\,S(W).
\end{align*}
In this way, the energy parameters $W$ are chosen in order to maximize agreement between moment evolution predicted by the Vlasov-Poisson system and moment evolution predicted by the Hamiltonian hydrodynamic model for the field $\psi$ and the centered moments $\bm{\mu}$.

Through application of the binomial formula for the degree-$k$ candidate evolved moment
\begin{align}
\overline{M}^k(\Delta t) = \sum_{\ell=0}^k (-1)^\ell \begin{pmatrix} k\\ \ell\end{pmatrix} (\nabla\psi(\Delta t))^\ell\odot \mu^{k-\ell}(\Delta t),\label{binomial_formula}
\end{align}
the $W$-dependence of the loss $S(W)$ can be made more explicit. For example, when $m_0=1$, with $\mu^0 = n$,  $\mu^1 \equiv \bm{P}_0$, the general Hamiltonian vector field $X_{\mathcal{H}} = (\dot{\psi}_{\mathcal{H}},\dot{n}_{\mathcal{H}},\dot{\bm{P}}_{0\,\mathcal{H}})$ is given by
\begin{align*}
\dot{\psi}_{\mathcal{H}} & = \frac{\delta\mathcal{H}}{\delta n}\\
\dot{n}_{\mathcal{H}} & = -\frac{\delta\mathcal{H}}{\delta \psi}\\
\dot{\bm{P}}_{0\,\mathcal{H}} & = -\frac{\delta\mathcal{H}}{\delta \bm{P}_0}\cdot \nabla\bm{P}_0 - \bigg(\nabla\frac{\delta\mathcal{H}}{\delta\bm{P}_0}\bigg)\cdot \bm{P}_0 - \bigg(\nabla\cdot\frac{\delta\mathcal{H}}{\delta \bm{P}_0}\bigg)\bm{P}_0.
\end{align*}
The loss $S(W)$ with $m=1$ may therefore be written explicitly as
\begin{align*}
S(W) &= \frac{1}{|\mathcal{D}|}\sum_{(\psi,\bm{\mu})\in\mathcal{D}} \int \left|n - \Delta t\frac{\delta \mathcal{H}_{W}}{\delta\psi} - \int f(\Delta t)\,dp \right|^2\,dq\\
& +\frac{1}{|\mathcal{D}|}\sum_{(\psi,\bm{\mu})\in\mathcal{D}}\int \bigg| \bm{P}_0 - \Delta t\frac{\delta\mathcal{H}_W}{\delta\bm{P}_0}\cdot \nabla\bm{P}_0 - \Delta t\,\bigg(\nabla\frac{\delta\mathcal{H}_W}{\delta\bm{P}_0}\bigg)\cdot \bm{P}_0 - \Delta t\bigg(\nabla\cdot \frac{\delta\mathcal{H}_W}{\delta\bm{P}_0}\bigg)\bm{P}_0\\
&\hspace*{13em}- \left(n - \Delta t\frac{\delta\mathcal{H}_W}{\delta \psi}\right)\nabla\left(\psi + \Delta t\frac{\delta\mathcal{H}_W}{\delta n}\right) - \int p\,f(\Delta t)\,dp\bigg|^2\,dq.
\end{align*}

\section{Discussion}
This Article has described a new method for developing fluid moment closures of the Vlasov-Poisson system with Hamiltonian structure. These closures apply in any number of space dimensions, and allow for collections of fluid moments with degree between $0$ and $m_0$, where $m_0$ is any non-negative integer. The closures work by introducing an irrotational velocity field $\bm{v}$ that specifies the ``center" of the distribution function in moment space. The theory presented here leads to both novel exact solutions of the Vlasov-Poisson system parameterized by arbitrarily-large collections of fluid moments, as well as a general framework for finding data-driven fluid closures with Hamiltonian structure.

Variable-moment data-driven fluid closures garner increasing attention in the machine learning community. Recent results\cite{Hauck_2021,Huang_Christlieb_II_2021,Huang_Christlieb_III_2023} indicate that it is possible to construct such closures while strictly enforcing hyperbolicity. The results presented in the present Article therefore suggest investigation of the following intriguing question. Is it possible to develop expressive variable-moment data-driven fluid closures that strictly enforce both hyperbolicity \emph{and} Hamiltonian structure? Within the closure framework presented here, such a constraint would limit the allowable forms of the closure Hamiltonian $\mathcal{H}_W(\psi,n,\bm{\mu})$.

The fluid closures presented here always include all fluid moments up to and including some maximum degree $m_0$. It would be interesting to determine whether these closures can be specialized to only include a subset of these moments. For example, are there subclosures with scalar pressure?

This Article has treated the electrostatic limit of collisionless plasma kinetic theory. Variable-moment fluid closures with Hamiltonian structure would also be interesting to develop for the electromagnetic problem, as embodied by the Vlasov-Maxwell system. Forthcoming publications will reveal that the Lie-theoretic methods used here transfer to the electromagnetic problem in a mathematically and physically appealing fashion.

Each of the fluid closures developed here is associated with a certain direct-sum Poisson bracket $\{\cdot,\cdot\}_{m_0}$. (See Eq.\,\eqref{fundamental_bracket}.) It would be useful to identify Casimir functions for this bracket and use them to study stability of solutions using the energy-Casimir method\cite{Holm_1985}. Finding the Casimirs would also be a necessary step when including dissipation in the fluid closures presented here using a metriplectic\cite{Morrison_met_1986} bracket.

As mentioned in the introduction, Tassi\cite{Tassi_2015,Tassi_2022,Tassi_2023} has found variable-moment fluid closures for slab drift-kinetic systems. But the methods used by Tassi to find closures are seemingly unrelated to the Lie-theoretic methods of Scovel-Weinstein. Is it possible to find Tassi's drift kinetic closures using Scovel-Weinstein theory\cite{Scovel_Weinstein_1994} as was done for the closures presented here? An affirmative result might suggest a path toward variable-moment gyrofluid closures for more intricate gyroaveraged kinetic theories such as full-$f$ gyrokinetics\cite{Brizard_2007,Burby_2015,Hirvijoki_2022}. In a similar vein, it would be interesting to determine whether the closures presented here can be understood as limits of the multiple waterbag reductions described by Perin \emph{et al.}\cite{PCMT_2015_waterbag}.

More broadly, this Article suggests that the methods developed by Scovel-Weinstein\cite{Scovel_Weinstein_1994} in the context of beam dynamics apply to a wider class of dissipation-free statistical closure problems than originally envisioned. There are many possible extensions. For example, consider the problem of solving the functional Vlasov equation for a variational weakly-nonlinear wave equation with random initial data. The algebra of functionals on the field-theoretic phase space enjoys a canonical Poisson bracket. By decomposing this algebra into affine functionals and functionals that vanish together with their first functional derivatives for the vacuum state, Scovel-Weinstein theory may be applied. The result will be Hamiltonian evolution equations for a finite collection of field correlation functions. These non-dissipative correlation dynamics do not seem to have been studied previously.

\section*{Data availability}
No datasets were generated or analysed during the current study.

\bibliography{sample}




\section*{Acknowledgements}

Research reported in this Article was supported by a US DOE Mathematical Multifaceted Integrated Capability Center (MMICC) grant. This Article benefits from several helpful discussions with C. Chandre, A. Christlieb, P. Morrison, E. Tassi, and C. Tronci.

\section*{Author contributions statement}
J.W.B. performed all work relevant to the preparation of this manuscript.

\section*{Additional information}
The author declares no competing interests. 






\end{document}